\let\csname equation*\endcsname\relax
\let\csname endequation*\endcsname\relax
\newtheorem{definition}{Definition}
\newtheorem{thm}{Theorem}
\newtheorem{prop}{Proposition}
\newtheorem{lemma}{Lemma}
\newtheorem{cor}{Corollary}
\newtheorem{remark}{Remark}
\DeclareMathOperator{\sign}{sign}
\DeclareMathOperator{\dd}{d}
\def\be{\begin{equation}}
\def\ee{\end{equation}}
\newcommand*{\textoverline}[1]{$\overline{\hbox{#1}}\m@th$}
\newcommand\bra[1]{{\langle#1|}}
\newcommand\ket[1]{{|#1\rangle}}
\begin{document}


\title[Koornwinder polynomials and the stationary open mASEP]
{Koornwinder polynomials and the stationary multi-species asymmetric exclusion process with open boundaries}

\author{$^1$Luigi~Cantini, $^2$Alexandr~Garbali, $^3$Jan~de~Gier and $^4$Michael~Wheeler}
\address{${}^1$Laboratoire de Physique Th\'eorique et Mod\'elisation (CNRS UMR 8089),
Universit\'e de Cergy-Pontoise, F-95302 Cergy-Pontoise, France, \\ ${}^{2,3,4}$Australian Research Council Centre of Excellence for Mathematical and Statistical Frontiers (ACEMS), School of Mathematics and Statistics, The University of Melbourne, VIC 3010, Australia}
\eads{$^1$luigi.cantini@u-cergy.fr,$^2$alexandr.garbali@unimelb.edu.au,$^3$jdgier@unimelb.edu.au, $^4$wheelerm@unimelb.edu.au}

\begin{abstract}
We prove that the normalisation of the stationary state of the multi-species asymmetric simple exclusion process (mASEP) is a specialisation of a Koornwinder polynomial. As a corollary we obtain that the normalisation of mASEP factorises as a product over multiple copies of the two-species ASEP.
\end{abstract}

\maketitle 



\section{Introduction}
The asymmetric simple exclusion process (ASEP) is a Markov chain of hopping particles. It describes the asymmetric diffusion of hard-core particles along a one-dimensional chain with $n$ sites, and is one of the best studied models in non-equilibrium statistical mechanics  \cite{ASEP1,ASEP2,Derrida98,Schuetz00,GolinelliMallick06,BEreview,Mallicklectures}.   In continuous time its transition rates are given as in Figure~\ref{fig:aseprates}.
\begin{figure}[h]
\begin{center}
\begin{tikzpicture}[scale=1]
 \draw(0.5,0) -- (10.5,0);
 \draw(0.5,-0.5) -- (0.5,0.5);
 \draw(10.5,-0.5) -- (10.5,0.5);
 
  \foreach \i in {0,...,9}
    {
    \pgfmathsetmacro{\xcoord}{1+\i))}
    \pgfmathsetmacro{\ycoord}{0}
    \draw[fill,color=white,draw=black] (\xcoord,\ycoord) circle (0.15);
    }
 \draw[fill,color=black] ({1+1},{0}) circle (0.15);
\draw[fill,color=black] ({1+4},{0}) circle (0.15);
\draw[fill,color=black] ({1+7},{0}) circle (0.15);
\draw[fill,color=black] ({1+8},{0}) circle (0.15);

\draw[thick,->] (5,0.5) arc (160:20:0.5) ;
\draw[thick,->] (5,-0.5) arc (-20:-160:0.5) ;
\node[scale=1] at (5.5,1.1){$t$};
\node[scale=1] at (4.5,-1.2){$1$};

\draw[thick,->] (0,0.5) arc (160:20:0.5) ;
\draw[->,thick] (1,-0.5) arc (-20:-160:0.5);
\node[scale=1] at (0.5,1.1){$\alpha$};
\node[scale=1] at (0.5,-1.2){$\gamma$};

\draw[thick,->] (10,0.5) arc (160:20:0.5) ;
\draw[->,thick] (11,-0.5) arc (-20:-160:0.5);
\node[scale=1] at (10.5,1.1){$\beta$};
\node[scale=1] at (10.5,-1.2){$\delta$};
\end{tikzpicture}
\end{center}
\caption{Rates of the asymmetric simple exclusion process with open boundary conditions.}
\label{fig:aseprates}
\end{figure}
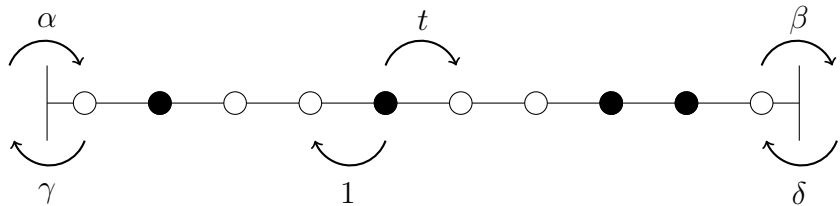

Configurations are labelled by binary strings $\mu=(\mu_1,\ldots,\mu_n)$ where $\mu_i=-1,+1$ for a hole or particle respectively. In terms of this notation the bulk rates are
\be
\begin{split}
(\ldots,-1,+1,\ldots) &\mapsto (\ldots,+1,-1,\ldots)\quad \text{with rate } 1,\\
(\ldots,+1,-1,\ldots) &\mapsto (\ldots,-1,+1,\ldots)\quad \text{with rate } t,
\end{split}
\ee
while at the boundary we have
\be
\begin{split}
(-1,\ldots) &\mapsto (+1,\ldots)\quad \text{with rate } \alpha,\\
(+1,\ldots) &\mapsto (-1,\ldots)\quad \text{with rate } \gamma,
\end{split}
\ee
and
\be
\begin{split}
(\ldots,+1) &\mapsto (\ldots,-1)\quad \text{with rate } \beta,\\
(\ldots,-1) &\mapsto (\ldots,+1)\quad \text{with rate } \delta.
\end{split}
\ee

At late times the ASEP exhibits a relaxation towards a non-equilibrium stationary state. In the presence of two boundaries at which particles are injected and extracted with given rates, the bulk behaviour at stationarity is strongly dependent on the injection and extraction rates. The corresponding phase diagram as well as various physical quantities have been determined by exact methods \cite{Derrida98,Schuetz00,DEHP,gunter,sandow,EsslerR95,PASEPstat1,PASEPstat2,BEreview,GierE1,GierE2}.

It is well known that the stationary state of the ASEP with open boundaries is related to the theory of Askey-Wilson polynomials \cite{UchiSW,CorteelW11}. In this paper we extend this connection to the multi-variable case of Koornwinder polynomials.

\subsection{Multi species}
This process can be generalised to include many species (or colours/gray scales) of particles such as depicted in Figure~\ref{fig:maseprates}.
\begin{figure}[h]
\begin{center}
\begin{tikzpicture}[scale=1]
 \draw(0.5,0) -- (10.5,0);
 \draw(0.5,-0.5) -- (0.5,0.5);
 \draw(10.5,-0.5) -- (10.5,0.5);
 
  \foreach \i in {0,...,9}
    {
    \pgfmathsetmacro{\xcoord}{1+\i))}
    \pgfmathsetmacro{\ycoord}{0}
    \draw[fill,color=white,draw=black] (\xcoord,\ycoord) circle (0.15);
    }
 \draw[fill,color=black!40] ({1},{0}) circle (0.15);
 \draw[fill,color=black] ({1+1},{0}) circle (0.15);
 \draw[fill,color=black!60] ({1+2},{0}) circle (0.15);
 \draw[fill,color=black!40] ({1+3},{0}) circle (0.15);
\draw[fill,color=black!80] ({1+4},{0}) circle (0.15);
\draw[fill,color=black!60] ({1+5},{0}) circle (0.15);
\draw[fill,color=black] ({1+7},{0}) circle (0.15);
\draw[fill,color=black!40] ({1+8},{0}) circle (0.15);
\draw[fill,color=black!60] ({1+9},{0}) circle (0.15);

\draw[thick,<->] (5,0.5) arc (160:20:0.5) ;
\draw[thick,<->] (5,-0.5) arc (-20:-160:0.5) ;
\node[scale=1] at (5.5,1.1){$t$};
\node[scale=1] at (4.5,-1.2){$1$};

\draw[thick,<->] (0,0.5) arc (160:20:0.5) ;
\draw[<->,thick] (1,-0.5) arc (-20:-160:0.5);
\node[scale=1] at (0.5,1.1){$\alpha$};
\node[scale=1] at (0.5,-1.2){$\gamma$};

\draw[thick,<->] (10,0.5) arc (160:20:0.5) ;
\draw[<->,thick] (11,-0.5) arc (-20:-160:0.5);
\node[scale=1] at (10.5,1.1){$\beta$};
\node[scale=1] at (10.5,-1.2){$\delta$};
\end{tikzpicture}
\end{center}
\caption{Rates of the multi-species asymmetric simple exclusion process with open boundary conditions.}
\label{fig:maseprates}
\end{figure}
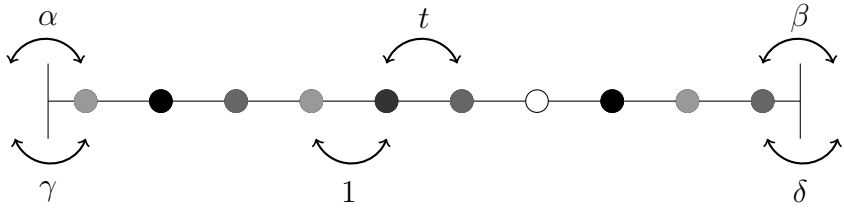
In this case configurations are labelled by strings $\mu=(\mu_1,\ldots,\mu_n)$ where $\mu_i\in \{-r,\ldots,-1,0,+1,\ldots,+r\}$ and each label represents a particular species of particles. In terms of this notation the bulk rates of the model we are interested in are given by
\be
\begin{split}
(\ldots,\mu_i,\mu_{i+1},\ldots) &\mapsto (\ldots,\mu_{i+1},\mu_i,\ldots)\quad \text{with rate }
\left\{\begin{array}{ll}
1 & \text{ if } \mu_i < \mu_{i+1},\\
t & \text{ if } \mu_i > \mu_{i+1}.
\end{array}\right.
\end{split}
\ee
At the boundary there are several possibilities that one could choose, here we take nonzero rates for the following events,
\be
\begin{split}
(-m,\ldots) &\mapsto (+m,\ldots)\quad \text{with rate } \alpha,\\
(+m,\ldots) &\mapsto (-m,\ldots)\quad \text{with rate } \gamma,
\end{split}
\ee
for $m\in\{1,\ldots,r\}$ and likewise 
\be
\begin{split}
(\ldots,+m) &\mapsto (\ldots,-m)\quad \text{with rate } \beta,\\
(\ldots,-m) &\mapsto (\ldots,+m)\quad \text{with rate } \delta.
\end{split}
\ee
Other boundary rates are considered, for example, in \cite{AyyerLS,AyyerLS2,CrampeMRV,CrampeFRV}. See also Section~\ref{se:genbc} of the current paper for generalisations.

While much is known about the stationary state of the multi-species ASEP with periodic boundary conditions \cite{FerrariM1,FerrariM2,EvansFM,ProlhacEM,AritaM,AritaAMP}, not much is known for open boundary conditions beyond rank 1, i.e. beyond the case of two-species \cite{Uchi,AyyerLS,AyyerLS2}. Multi-species totally asymmetric exclusion processes with inhomogeneous hopping rates were considered in \cite{Cantini09,AyyerL12,AritaM,AyyerL14}.

\section{Transition and transfer matrix for rank 1}
\label{se:Tmat1}

The state space $\mathcal{H}_{n,1}:= \left(\mathbb{C}^{3}\right)^{\otimes n}$ of the rank 1 asymmetric exclusion process is spanned by the standard basis
\be
\ket{\mu} = \ket{\mu_1,\ldots,\mu_n},\qquad \mu_i\in\{-1,0,1\}.
\ee
This case is often called the two-species ASEP in the literature \cite{Cantini15,CorteelW,CorteelMW}. In our setup the boundary conditions preserve the number of 0's and the ``single species'' ASEP is simply the sector of the rank 1 ASEP without 0's. 

\subsection{Continuous time transition matrix}
The transition rules can be conveniently encoded in a transition matrix acting on this basis. In this setup the master equation for the time evolution of a state $\ket{\Psi(t)}$ for the ASEP in continuous time is given by
\be
\frac{\dd }{\dd t} \ket{\Psi(t)} = L \ket{\Psi(t)},\qquad \ket{\Psi(t)} = \sum_{\mu} \psi_{\mu} \ket{\mu},
\ee
with a transition matrix or generator $L$ constructed below. Let us firstly describe the standard ASEP with one species of particles, and holes. This case corresponds to rank $r=1$ but we initially leave out colour $0$. For two sites, on the basis $\{\ket{-1,-1},\ket{-1,+1},\ket{+1,-1},\ket{+1,+1}\}$ the transition matrix $L$ is given by
\be
L  =
\begin{pmatrix}
-\alpha & \gamma \\
\alpha & -\gamma
\end{pmatrix}
\otimes 
\mathbb{I}_2
+
\begin{pmatrix}
0 & 0 & 0 & 0 \\
0 & -1 & t & 0 \\
0 & 1 & -t & 0 \\
0 & 0 & 0 & 0
\end{pmatrix}
+
\mathbb{I}_2
\otimes
\begin{pmatrix}
-\delta &  \beta\\
\delta & -\beta
\end{pmatrix},
\ee
where $\mathbb{I}_k$ is the $k\times k$ identity matrix. On $n$ sites we write the transition matrix for the ASEP as 
\be
L= L_0 + \sum_{i=1}^{n-1} L_i + L_{n},
\label{eq:L}
\ee
where
\begin{align}
L_0 &= \begin{pmatrix}
-\alpha & \gamma \\
\alpha & -\gamma
\end{pmatrix}
\otimes 
\mathbb{I}_2^{\otimes n},\nonumber\\
L_i &= \mathbb{I}_2^{\otimes(i-1)}\otimes 
\begin{pmatrix}
0 & 0 & 0 & 0 \\
0 & -1 & t & 0 \\
0 & 1 & -t & 0 \\
0 & 0 & 0 & 0
\end{pmatrix}
 \otimes \mathbb{I}_2^{\otimes(n-i-1)},
 \label{eq:transm}\\
L_n &= \mathbb{I}_2^{\otimes n}
\otimes
\begin{pmatrix}
-\delta &  \beta\\
\delta & -\beta
\end{pmatrix}.\nonumber
\end{align}

\subsection{Discrete time transfer matrix}
In this section we define a Yang--Baxter integrable discrete time transfer matrix \cite{baxterbook} that commutes with the continuous time transition matrix $L$ \eqref{eq:L}. To do so we need to define a R-matrix based on the quantum group $U_{t^{1/2}}(A^{(1)}_{1})$, as well as boundary K-matrices that incorporate the boundary transition rates.

\subsubsection{R-matrix and buk hopping rates.}

We first introduce the functions
\begin{align}
b^+  &= \displaystyle \frac{t(1-x)}{t-x},  &  b^-  &= t^{-1} b^+ , \nonumber\\
c^+ & =1-b^+ , & c^- &=1-b^- .
\end{align}
In terms of these functions the R-matrix associated to the ASEP becomes
\begin{align}
\check{R}(x)=
\begin{pmatrix}
1 & 0 & 0 & 0
\\
0 & c^- &  b^+ & 0
\\
0 & b^- &c^+ & 0
\\
0 & 0 & 0 & 1
\end{pmatrix}.
\end{align}
The bulk hopping transition matrix $L_i$ \eqref{eq:transm} is given by the derivate of the matrix $\check{R}(x)$,
\begin{align}
L_i =(1-t) \check{R}'_i(1).
\end{align}
Reintroducing the ``second class particle'' label 0 the transition rates are encoded in the R-matrix based on $U_{t^{1/2}}(A^{(1)}_{2})$ and given by
\begin{align}
\check{R}(x)=
\begin{pmatrix}
1 & 0 & 0 & 0 & 0 & 0 & 0 & 0 & 0
\\
0 & c^- & 0 & b^+ & 0 & 0 & 0 & 0 & 0
\\
0 & 0 & c^- & 0 & 0 & 0 & b^+ & 0 & 0
\\
0 & b^- & 0 & c^+ & 0 & 0 & 0 & 0 & 0
\\
0 & 0 & 0 & 0 & 1 & 0 & 0 & 0 & 0
\\
0 & 0 & 0 & 0 & 0 & c^- & 0 & b_+ & 0
\\
0 & 0 & b^- & 0 & 0 & 0 & c^+ & 0 & 0
\\
0 & 0 & 0 & 0 & 0 & b^- & 0 & c^+ & 0
\\
0 & 0 & 0 & 0 & 0 & 0 & 0 & 0 & 1
\end{pmatrix}, 
\end{align}
and again 
\be
L_i =(1-t) \check{R}'_i(1),\qquad \check{R}_i(x)=\mathbb{I}_3^{\otimes(i-1)}\otimes \check{R}(x)\otimes \mathbb{I}_3^{\otimes(n-i-1)}.
\ee 

It will also be useful to supply the R-matrix with two indices since it acts on two tensor components, we will write $\check{R}_{i,i+1}(x)=\check{R}_i(x)$, as well as $\check{R}_{i,i+2}(x)=P_{i+1,i+2}\check{R}_i(x) P_{i+1,i+2}$ and so on, where we used the permutation matrix
 \be
 P_{i,i+1}=\mathbb{I}_3^{\otimes(i-1)}\otimes P \otimes \mathbb{I}_3^{\otimes(n-i-1)},
 \ee 
 and
\be
P=\sum_{a,b} E^{(a,b)}\otimes E^{(b,a)},
\ee 
defined in terms of the matrix units $E^{(a,b)}$ which have a single non-zero entry equal to $1$ at position $(a,b)$.
 For the construction of the discrete time transfer matrix we will use the unchecked R-matrix
\be
R(x)=P\check{R}(x).
\ee

\subsubsection{K-matrix and boundary hopping rates.}

Following the standard boundary integrability approach \cite{Skl} (see also \cite{CrampeRV})
we encode the boundary events using K-matrices. Define $h_0(a,c,x)$ and $h_n(b,d,x)$ as
\be
h_0(a,c,x)=(x+a)(x+c),\qquad h_n(b,d,x)=(bx+1)(dx+1),
\ee 
and two additional constants $t_0=-a c$ and $t_n=-b d$ with $a,b,c$ and $d$ being new parameters related to the hopping rates as we will see below.
The boundary matrix $K_0(x)$ is given by
\be\label{K01}
K_0(x) = \mathbb{I}_3 +\frac{1-x^2}{h_0(a,c,x)} 
\begin{pmatrix}
t_0 & 0 & -1  \\
0 & 0 & 0 \\
-t_0 & 0 & 1
\end{pmatrix},
\ee
The matrix $K_0(x)$ is stochastic as its columns add up to $1$ and for suitable values of the parameters the off-diagonal elements are non-negative. The right hand side boundary matrix $K_n(x)$ is defined as
\be\label{Kn1}
K_n(x) = \mathbb{I}_3 - \frac{1-x^2}{h_n(b,d,x)} 
\begin{pmatrix}
1 & 0 & -t_n  \\
0 & 0 & 0 \\
-1 & 0 & t_n
\end{pmatrix}.
\ee
The boundary transitions matrices $L_0$ and $L_n$ in \eqref{eq:transm} are obtained from the two matrices $K_0(x)$ and $K_n(x)$ by taking the derivative
 \begin{align}
L_0=\tfrac12(1-t) K'_0(1), \qquad L_n=-\tfrac12(1-t)K'_n(1).
\end{align}
The new parameters $a,b,c$ and $d$ are related to the hopping rates $\alpha, \beta, \gamma$ and $\delta$ through the formul\ae
 \begin{align}
&\alpha =\frac{-a c (1-t)}{(1+a) (1+c)}, \qquad \gamma =\frac{1-t}{(1+a) (1+c)},\\
&\beta =\frac{-b d (1-t)}{(1+b) (1+d)},  \qquad \delta =\frac{1-t}{(1+b) (1+d)}.
\end{align}

In order to be able to define a family of commuting transfer matrices, we associate for later purposes the dual K-matrix to each K-matrix (\ref{K01}) and (\ref{Kn1}) in the following way. Introduce the matrix $\widetilde{R}(x)$
\begin{align}\label{Rwidetilde}
\widetilde{R}(x)= \left(\left(R\left(x\right)^{\tau_1}\right){}^{-1}\right){}^{\tau _1},
\end{align}
where the superscript $\tau_1$ denotes the transposition $\tau$ of the first component of the tensor product $(a\otimes b)^{\tau_1}=a^{\tau}\otimes b$, then the dual matrices $\widetilde{K}_n$ and $\widetilde{K}_0$
\begin{align}
&\widetilde{K}_0(x)=\text{Tr}_2\left(  ( \mathbb{I}_3 \otimes K_0\left(x \right)) \widetilde{R}(x^2)P \right),\label{Kd01}\\
&\widetilde{K}_n(x)=\text{Tr}_1\left(  (K_n\left(x^{-1}\right)\otimes \mathbb{I}_3) \widetilde{R}(x^2)P \right), \label{Kdn1}
\end{align}
where $\text{Tr}_1$ and $\text{Tr}_2$ are traces taken over the first and second components of the tensor product respectively. The resulting K-matrices read
\begin{align}
&\widetilde{K}_0(x)=\kappa_0(x) \left( U^{-1} + \frac{x^2-t^3}{t^2 h_0\left(a,c,x/t\right)}
\begin{pmatrix}
-t_0 & 0 & 1  \\
0 & 0 & 0 \\
t_0/t^{-1} & 0 & -1/t
\end{pmatrix}\right),\\
&\widetilde{K}_n(x)=\kappa_n(x)\left(U - \frac{x^2-t^3}{t x^2 h_n\left(b,d,t /x\right)}
\begin{pmatrix}
1 & 0 & -t_n  \\
0 & 0 & 0 \\
-t & 0 &t~ t_n
\end{pmatrix}\right), \\
&\kappa_0(x)= \frac{t^2 \left(x^2-t\right) h_0\left(a,c,x/t\right)}{\left(x^2-t^3\right) h_0(a,c,x)}, \qquad
\kappa_n(x)=\frac{\left(x^2-t\right) h_n\left(b,d,t/x\right)}{\left(x^2-t^3\right) h_n\left(b,d,1/x\right)}, \nonumber
\end{align}
where $U=\text{diag}\{t^{-1},1,t\}$.

\subsubsection{Commuting transfer matrices.}
\label{se:Tmat}
The R-matrix and the (dual) boundary K-matrices are used to build the transfer matrix $T(x)$ who's logarithmic derivative evaluated at $x=1$ yields the transition matrix $L$. The transfer matrices at different values of the parameter $x$ commute with each other due to the intertwining relations and unitarity conditions satisfied by the R-matrix and the K-matrices. 

The transfer matrix for a system with $n$ sites is an operator acting in the space $\mathcal{H}_{n,r}=V^{(r)}_1\otimes V^{(r)}_2 \otimes \cdots \otimes V^{(r)}_n$, a tensor product of $n$ copies of $V_i^{(r)}\simeq\mathbb{C}^{2r+1}$. This matrix is built by multiplying a two-row monodromy matrix $M(w)$ by the dual K-matrix $\widetilde{K}_n$ and tracing over an auxiliary space denoted by $V_0^{(r)}$. First we define two single row monodromy matrices $M^{(1)}(w)$ and $M^{(2)}(w)$ which are constructed as follows
\begin{align}
&M^{(1)}(w)=R_{0,n}(w) R_{0,n-1}(w)\dots R_{0,1}(w), \label{Mon1}\\
&M^{(2)}(w)=R_{1,0}(w) R_{2,0}(w)\dots R_{n,0}(w),  \label{Mon2}.
\end{align}
Using the K-matrix $K_0$, the  two row monodromy matrix $M(w)$ is then defined by
\begin{align}
&M(w)=M^{(1)}(w) K_0(w) M^{(2)}(w).\label{Mon}
\end{align}
Considering $M(w)$ as a $3\times 3$ matrix in $V_0$ we finally arrive at the definition of the transfer matrix $T(w)$:
\begin{align}\label{Tmat}
T(w)=\text{Tr}_0 ( M(w) \widetilde{K}_n (w) ).
\end{align}

A more general version of the transfer matrix is obtained if we associate with each space $V_j^{(r)}$ in $\mathcal{H}_{n,r}$ an inhomogeneity parameter $x_j$. In this case we must replace the R-matrices $R_{0,j}(w)$ in (\ref{Mon1}) by $R_{0,j}(w x_j)$ and $R_{j,0}(w)$ in (\ref{Mon2}) by $R_{j,0}(w/x_j)$. This leads to the inhomogeneous monodromy matrices $M^{(1)}(w;x_1,..,x_n)$ and $M^{(2)}(w;x_1,..,x_n)$ and to the inhomogeneous transfer matrix $T(w;x_1,..,x_n)$. We abbreviate $T(w)=T(w;1,..,1)$.

The following is a standard result in the theory of integrable models.
\begin{thm}
\label{th:commut}
The transfer matrix $T(w)$ defined in \eqref{Tmat} form a commuting family, i.e. $[T(w_1),T(w_2)]=0$. Furthermore, the Markov matrix $L$ is obtained form $T(w)$ define in \eqref{Tmat} by taking the derivative at $w=1$. In our conventions $T(1;x_1,\ldots,x_n))=\mathbb{I}$ (as well as $K_0(1)=K_n'(1)=\mathbb{I}_3$, $R_i(1)=P_i$), hence we simply get
\begin{align}
\label{eq:LderT}
L=\frac{1-t}{2}T'(1)=(1-t)
\left(\frac{1}{2} K_0'(1)+\sum_{i=1}^{n-1}\check{R}_{i,i+1}'(1)-\frac{1}{2} K_n'(1) \right).
\end{align}
\end{thm}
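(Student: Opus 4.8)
The plan is to treat this as the open‑chain instance of Sklyanin's reflection‑algebra construction \cite{Skl,baxterbook}, so that the work splits into the algebraic identities behind the commuting family and the Hamiltonian (logarithmic‑derivative) limit. For $[T(w_1),T(w_2)]=0$ I would reduce to four facts about the data: (i) the Yang--Baxter equation for $R(x)=P\check R(x)$; (ii) unitarity, $\check R(x)\check R(1/x)=\mathbb I$, equivalently $R_{12}(x)R_{21}(1/x)=\mathbb I$; (iii) the reflection equations satisfied by $K_0$ in \eqref{K01} and by $K_n$ in \eqref{Kn1}; and (iv) the dual reflection equation satisfied by $\widetilde K_n$ in \eqref{Kdn1}. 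Items (i)--(iii) are finite explicit checks on the displayed matrices. Item (iv) requires no separate work: $\widetilde K_n$ is \emph{defined} from $K_n$ in \eqref{Kdn1} through the boundary‑crossing construction using $\widetilde R$ of \eqref{Rwidetilde}, and the crossing‑unitarity of $\widetilde R$ converts the reflection equation for $K_n$ into the dual one for $\widetilde K_n$ --- this is why $\widetilde K_n$ is introduced. Granting (i)--(iv): from (i) and (iii), the matrix $M(w)=M^{(1)}(w)K_0(w)M^{(2)}(w)$ of \eqref{Mon} obeys Sklyanin's reflection algebra; together with the dual reflection equation for $\widetilde K_n$, the unitarity (ii), and cyclicity of the double trace over two auxiliary spaces, the standard ``railroad'' manipulation of $T(w_1)T(w_2)$ (insert the identity $R_{0,0'}(w_1/w_2)R_{0',0}(w_2/w_1)=\mathbb I$, drag it through $M$ and $\widetilde K_n$, remove it again) yields $T(w_2)T(w_1)$. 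The argument is insensitive to the inhomogeneities.

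\textbf{The value at $w=1$.} Since $b^+$ and $b^-$ vanish at $x=1$ one has $\check R(1)=\mathbb I$, hence $R_{0,j}(1)=P_{0,j}$, and also $K_0(1)=\mathbb I_3$. Pairing each $R_{0,j}(x_j)$ in $M^{(1)}$ with the matching $R_{j,0}(1/x_j)$ in $M^{(2)}$ and applying unitarity (ii), the two monodromy strings cancel one $R$‑matrix at a time, so $M(1;x_1,\dots,x_n)=\mathbb I$ on $V_0^{(r)}\otimes\mathcal{H}_{n,r}$ and therefore $T(1;x_1,\dots,x_n)=\mathrm{Tr}_0\widetilde K_n(1)$. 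A short direct calculation with the explicit $\widetilde K_n$ and the normalisation $\kappa_n$ --- the diagonal entries $t^{-1},1,t$ of $U$ and the two surviving entries of the correction matrix recombining through $h_n(b,d,t)$ and $h_n(b,d,1)$ --- gives $\mathrm{Tr}_0\widetilde K_n(1)=\mathbb I_{\mathcal{H}}$, while $K_0(1)=\mathbb I_3$ and $R_i(1)=P_i$ are read off directly.

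\textbf{The Markov matrix.} Set all $x_j=1$ and differentiate $T(w)$ at $w=1$ by the Leibniz rule over the $2n+2$ factors $R_{0,n},\dots,R_{0,1},K_0,R_{1,0},\dots,R_{n,0},\widetilde K_n$, replacing every undifferentiated $R_{0,j}(1)$ by $P_{0,j}$ and $K_0(1)$ by $\mathbb I_3$. I then simplify using only telescoping matrix identities among the permutations $P_{0,j}$ --- being careful \emph{not} to use cyclicity of $\mathrm{Tr}_0$ on operators supported on a physical site, since $\mathrm{Tr}_0$ allows cyclic moves only of factors supported on $V_0^{(r)}$. The result is: the derivative on $K_0$ conjugates to $K_0'(1)$ on site $1$; the derivatives on the bulk $R_{0,k}$ ($k\le n-1$) from $M^{(1)}$ and on $R_{k,0}$ ($k\le n-1$) from $M^{(2)}$ each conjugate to $\check{R}_{k,k+1}'(1)$, producing $2\sum_{i=1}^{n-1}\check{R}_{i,i+1}'(1)$ (the identity $\mathrm{Tr}_0\widetilde K_n(1)=\mathbb I$ removes the leftover auxiliary trace in these terms); and the derivatives on $R_{0,n}$ from $M^{(1)}$, on $R_{n,0}$ from $M^{(2)}$, and on $\widetilde K_n$ combine, via the defining trace formula \eqref{Kdn1} and the crossing‑unitarity of $\widetilde R$, into $-K_n'(1)$ on site $n$. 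Hence
\begin{align}
\frac{1-t}{2}T'(1)=(1-t)\Big(\tfrac12 K_0'(1)+\sum_{i=1}^{n-1}\check{R}_{i,i+1}'(1)-\tfrac12 K_n'(1)\Big),
\end{align}
which equals $L_0+\sum_{i=1}^{n-1}L_i+L_n=L$ of \eqref{eq:L}--\eqref{eq:transm}, using $L_0=\tfrac12(1-t)K_0'(1)$, $L_i=(1-t)\check{R}_i'(1)$ and $L_n=-\tfrac12(1-t)K_n'(1)$ recorded above.

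\textbf{The main obstacle.} The one genuinely nontrivial identity is the last step: that the combination $2\,\mathrm{Tr}_0\big(\check{R}_{n,0}'(1)\widetilde K_n(1)\big)+\mathrm{Tr}_0\big(\widetilde K_n'(1)\big)$ reduces to $-K_n'(1)$ acting on site $n$. This ``boundary crossing in the Hamiltonian limit'' identity is where the derivative of the dual K‑matrix together with the two $R$‑matrices adjacent to the right boundary must be resummed using \eqref{Kdn1}; the crossing factor $\widetilde R(w^2)$ produces both the overall minus sign and the $\tfrac12$ in $-\tfrac12 K_n'(1)$. Everything else --- Yang--Baxter, unitarity, the reflection equations for $K_0$ and $K_n$, and the telescoping at $w=1$ --- is a finite explicit verification.
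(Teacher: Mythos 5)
Your commutativity argument follows the same Sklyanin ``railroad'' route as the paper (carried out in \ref{tmcomm}): Yang--Baxter, unitarity, crossing unitarity and the two reflection equations, with the dual reflection equation for $\widetilde K_n$ obtained from its defining trace formula \eqref{Kdn1}; nothing to add there. For the derivative formula \eqref{eq:LderT}, however, you take a genuinely different route from the paper. You differentiate the traced expression \eqref{Tmat} directly by Leibniz, which forces you to confront the boundary identity
\begin{equation*}
2\,\mathrm{Tr}_0\bigl[\check{R}'_{n,0}(1)\,\widetilde K_{n}(1)\bigr]+\mathrm{Tr}_0\bigl[\widetilde K{}_n'(1)\bigr]\cdot\mathbb{I}=-K_n'(1),
\end{equation*}
which you correctly single out as the one nontrivial step but only sketch rather than prove. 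The paper instead introduces the scattering matrices $S_i(x_1,\ldots,x_n)=T(w=x_i;x_1,\ldots,x_n)$ of \eqref{Smat}: setting $w=x_i$ turns $R_{0,i}(w/x_i)$ into a permutation, which resolves the auxiliary trace and replaces $\widetilde K_n(w)$ by $K_n(1/x_i)$ acting on a physical site; since $T(1;x_1,\ldots,x_n)=\mathbb{I}$ identically in the inhomogeneities, $\frac{\dd}{\dd x_i}S_i$ at the homogeneous point equals $T'(1)$, and the term $-K_n'(1)$ then falls out of the chain rule applied to $K_n(1/x_i)$ with no trace identity needed. In short, the paper's trick buys you exactly the avoidance of the identity you label ``the main obstacle''; your route is standard and would go through, but to be complete you must actually verify that finite matrix identity (it does hold, using $\mathrm{Tr}_0\widetilde K_n(1)=1$, which you correctly establish), whereas your telescoping computation of the value at $w=1$ and of the bulk and left-boundary terms agrees with what the paper obtains.
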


\begin{proof}
It is a standard calculation to prove that  $[T(w_1),T(w_2)]=0$, for any $w_1$ and $w_2$ \cite{Skl}. In order to prove it one needs to use the unitarity relation, Yang--Baxter equation, crossing unitarity relation and the reflection equations which we give below. We give the details of this calculation in \ref{tmcomm}. The commutativity of the transfer matrices implies $[T(w),L]=0$, therefore the matrices $T(w)$ and $L$ share the same eigenvectors. We list below the various relations needed for commutativity to hold.

To prove \eqref{eq:LderT} let us define the $n$ matrices $S_i(x_1,..,x_n)=T(w=x_i;x_1,..,x_n)$ for $i=1,..,n$. Setting the parameter $w=x_i$ reduces the transfer matrix to the following form
\begin{align}\label{Smat}
S_i(x_1,..,x_n)=& R_{i,i-1}(x_i/x_{i-1})..R_{i,1}(x_i/x_{1}) K_0(x_i) R_{1,i}(x_i x_1)..R_{i-1,i}(x_i x_{i-1}) \times \nonumber \\
			& R_{i+1,i}(x_i x_{i+1})..R_{L,i}(x_i x_L) K_n(1/x_i) R_{i,L}(x_i/x_{L})..R_{i,i+1}(x_i/x_{i+1}).
\end{align}
The derivative at the homogeneous point $x_1=x_2=\ldots =x_n=1$ is equivalent to
\begin{align}
\left. \frac{\dd}{\dd x_i}\right|_{\{x_j=1\}} S_i(x_1,..,x_n) = \left. \frac{\dd}{\dd w}\right|_{w=1} T(w) +  \left. \frac{\dd}{\dd x_i}\right|_{x_i=1} T(1;x_i) = T'(1),
\end{align}
because the second derivate vanishes as $T(1;x_1,\ldots,x_n)=\mathbb{I}$. To prove \eqref{eq:LderT} it is now a straightforward calculation to show that
\[
L= \frac12 (1-t) \left. \frac{\dd}{\dd x_i}\right|_{\{x_j=1\}} S_i(x_1,..,x_n).
\]
We list below the key ingredients for Theorem~\eqref{th:commut} to hold.

\bigskip
\noindent$\bullet$ \textit{Yang-Baxter equation}\\
The first intertwining relation is the Yang--Baxter equation written in $\mathcal{H}_{3,1}$
\begin{align}\label{YBeq}
R_{1,2}\left(y/x\right)R_{1,3}\left(y/z\right)R_{2,3}\left(x/z\right)=R_{2,3}\left(x/z\right)R_{1,3}\left(y/z\right) R_{1,2}\left(y/x\right).
\end{align}
While the unchecked R-matrices are most natural to define the transfer matrix, many of the fundamental relations are more naturally written using the checked matrices,in particular with a view to the Hecke algebra formulation in Section~\ref{se:Hecke}. Multiplying by $P_{1,2}P_{1,3}P_{2,3}$ both sides of this equation from the left  we get the Yang--Baxter equation in terms of $\check{R}$,
\begin{align}\label{YBcheck}
\check{R}_{2,3}\left(y/x\right)\check{R}_{1,2}\left(y/z\right)\check{R}_{2,3}\left(x/z\right)=\check{R}_{1,2}\left(x/z\right)\check{R}_{2,3}\left(y/z\right)
   \check{R}_{1,2}\left(y/x\right),
\end{align}
which is a version of the braid relation \eref{eq:braidi} that we shall encounter in Section~\ref{se:Hecke}.\\

\noindent$\bullet$ \textit{Unitarity and Crossing}\\
The matrix $R(x)$ satisfies the unitarity condition
\begin{align}\label{Unitarity}
R_{1,2}\left(x\right)R_{2,1}\left(1/x\right)=\mathbb{I}_3 \otimes \mathbb{I}_3 = \check{R}(x)\check{R}(1/x),
\end{align}
and the crossing unitarity condition
\begin{align}\label{CUnitarity}
\quad (\mathbb{I}_3\otimes U)R_{2,1}\left(t^3/x\right)^{\tau_1}(\mathbb{I}_3\otimes U^{-1})R_{1,2}(x)^{\tau_1}=-
\frac{(x-1) \left(t^3-x\right)}{(t-x)\left(t^2-x\right)}   \mathbb{I}_3\otimes\mathbb{I}_3.
\end{align}

\noindent$\bullet$ \textit{Left hand side reflection equation}\\
The boundary matrix $K_0(x)$ satisfies the reflection equation written in $\mathcal{H}_{2,1}$,
\begin{multline}
\label{ReflK0Check}
\left(K_0(x) \otimes \mathbb{I}_3 \right)\check{R}(wx)\left(K_0(w)\otimes\mathbb{I}_3\right) \check{R}(w/x)= \\
  \check{R}(w/x)\left(K_0(w)\otimes \mathbb{I}_3\right) \check{R} (wx)\left(K_0(x)\otimes \mathbb{I}_3 \right),
\end{multline}
which should be compared to the boundary braid relation \eref{eq:braid0} in the Hecke algebra. In terms of the unchecked $R$ this equation can also be written as

\begin{multline}\label{ReflK0}
\left(\mathbb{I}_3\otimes K_0(x)\right) R_{1,2}(w x)\left(K_0(w)\otimes\mathbb{I}_3\right)R_{2,1}(w/x)=\\
  R_{1,2}(w/x)\left(K_0(w)\otimes \mathbb{I}_3\right) R_{2,1} (w x)\left(\mathbb{I}_3\otimes K_0(x)\right).
\end{multline}
The reflection equations with the dual K-matrix $\widetilde{K}_0$ reads
\begin{multline}
\left (\mathbb{I}_3\otimes \widetilde{K}_0(x)\right) \widetilde{R}_{1,2}(w x) \left(\widetilde{K}_0 (w)\otimes\mathbb{I}_3\right) R_{1,2}(x/w)= \\ R_{2,1}(x/w)\left(\widetilde{K}_0(w)\otimes \mathbb{I}_3\right)\widetilde{R}_{2,1}(w x)\left(\mathbb{I}_3\otimes\widetilde{K}_0(x)\right).\label{ReflDK0}
\end{multline}

\noindent$\bullet$ \textit{Right hand side reflection equation}\\
Similarly for the K-matrix $K_n(x)$ we have
\begin{multline}\label{ReflKnCheck}
\left(\mathbb{I}_3\otimes K_n(x)\right) \check{R}(1/w x)\left(\mathbb{I}_3\otimes K_n(w)\right) \check{R}(x/w)= \\
\check{R}(x/w)\left(\mathbb{I}_3\otimes K_n(w) \right) \check{R}(1/w x)\left(\mathbb{I}_3\otimes K_n(x)\right),
\end{multline}
which should be compared to \eref{eq:braidn}, and in unchecked form is written as
\begin{multline}\label{ReflKn}
\left(\mathbb{I}_3\otimes K_n(x)\right) R_{2,1}(1/w x)\left(K_n(w)\otimes \mathbb{I}_3\right) R_{1,2}(x/w)= \\
R_{2,1}(x/w)\left(K_n(w)\otimes \mathbb{I}_3\right)R_{1,2}(1/w x)\left(\mathbb{I}_3\otimes K_n(x)\right).
\end{multline}
The reflection equation with the dual K-matrix $\widetilde{K}_n$ reads
\begin{multline}
\left(\mathbb{I}_3\otimes\widetilde{K}_n(x)\right) \widetilde{R}_{2,1}(w x) \left(\widetilde{K}_n(w)\otimes\mathbb{I}_3\right) R_{2,1}(x/w)= \\
R_{1,2}(x/w)\left(\widetilde{K}_n(w)\otimes \mathbb{I}_3\right)\widetilde{R}_{2,1}(w x)\left(\mathbb{I}_3\otimes\widetilde{K}_n(x)\right). \label{ReflDKn}
\end{multline}
\end{proof}

We also note that one can rewrite the transfer matrix \eqref{Tmat} in a form where the K-matrix $K_0$ is replaced by its dual and the dual matrix $\widetilde{K}_n (w)$ is replaced by  $K_n(w)$. This is possible because of the form of the dual matrices, containing  $\widetilde{R}(w^2)$ which intertwines the neighbouring R-matrices in $T(w)$. Explicitly this means that we can write $T$ as
\begin{align}
&\overline{M}(w;x_1,..,x_n)=M^{(2)}(w;x_1,..,x_n) K_n(1/w) M^{(1)}(w;x_1,..,x_n), \nonumber \\
&T(w;x_1,..,x_n)=\text{Tr}_0 ( \overline{M}(w;x_1,..,x_n) \widetilde{K}_0 (w) ). \label{Tmatd}
\end{align}
More details in the case of $n=2$ are provided in \ref{se:n=2}

\subsubsection{Exchange relations.}
Finally, let us mention several properties of the inhomogeneous transfer matrix. As a result of (\ref{YBcheck}) we have the bulk exchange relation
\begin{align}\label{RT}
\check{R}_{i}(x_{i+1}/x_i)T(w;x_1,..,x_i,x_{i+1},..,x_n)=T(w;x_1,..,x_{i+1},x_{i},..,x_n)\check{R}_{i}(x_{i+1}/x_i).
\end{align}
Equation (\ref{ReflK0}) and the inhomogeneous version of the definition (\ref{Tmat}) of the transfer matrix lead to the first boundary exchange relation
\begin{align}\label{K0T}
K_{0}(x_1)T(w;x_1,..,x_n)=T(w;1/x_1,..,x_n)K_{0}(x_1).
\end{align}
The second boundary exchange relation 
\begin{align}\label{KnT}
K_{n}(x_n)T(w;x_1,..,x_n)=T(w;x_1,..,1/x_n)K_{n}(x_n),
\end{align}
is satisfied due to (\ref{ReflKn}).

\section{Multi-species or higher rank ASEP}

The state space $\mathcal{H}_{n,r}:= \left(\mathbb{C}^{2r+1}\right)^{\otimes n}$ of the multi-species asymmetric exclusion process is spanned by the standard basis
\be
\ket{\mu} = \ket{\mu_1,\ldots,\mu_n},\qquad \mu_i\in\{-r,\ldots,r\}.
\ee

The $2r+1$-species asymmetric exclusion process is based on the R-matrix of $U_{t^{1/2}}(A^{(1)}_r)$, which can be expressed in the form
\begin{align}
\label{Rr}
\check{R}^{(2r+1)} (x)
&=
\sum_{i=-r}^{r}
E^{(ii)}
\otimes
E^{(ii)}
+
\sum_{-r \leq i < j \leq r}
\Big(
b^+ E^{(ij)}
\otimes
E^{(ji)}
+
b^- 
E^{(ji)}
\otimes
E^{(ij)}
\Big)\nonumber
\\
&\sum_{-r \leq i < j \leq r}
\Big(
c^-
E^{(ii)}
\otimes
E^{(jj)}
+
c^+
E^{(jj)}
\otimes
E^{(ii)}
\Big)\\
&=
\sum_{i=-r}^{r}
E^{(ii)}
\otimes
E^{(ii)}
+
\frac{1- x}{t - x}
\sum_{-r \leq i < j \leq r}
\Big(
t E^{(ij)}
\otimes
E^{(ji)}
+
E^{(ji)}
\otimes
E^{(ij)}
\Big)\nonumber
\\
& \frac{t-1}{t - x}
\sum_{-r \leq i < j \leq r}
\Big(
x
E^{(ii)}
\otimes
E^{(jj)}
+
E^{(jj)}
\otimes
E^{(ii)}
\Big). \nonumber
\end{align}
where $E^{(ij)}$ denotes the elementary $(2r+1) \times (2r+1)$ matrix with a single non-zero entry 1 at position $(i,j)$. The corresponding boundary matrix $K_0(x)$ is given by
\begin{align}
\label{K0r}
K^{(2r+1)}_0 (x)
=
\sum_{i=-r}^{r}
E^{(ii)}
+
\frac{1-x^2}{h_0(a,c,x)}
\Big(
\sum_{0 < i \leq r}
t_0 E^{(-i,-i)}
+
E^{(r+1-i,r+1-i)}
\nonumber\\
{}
-
\sum_{0 < i \leq r}
E^{(-i,r+1-i)}
+ t_0
E^{(r+1-i,-i)}
\Big),
\end{align}
and $K_n(x)$ is defined as
\begin{align}
\label{Knr}
K^{(2r+1)}_n (x)
=
\sum_{i=-r}^{r}
E^{(ii)}
-
\frac{1-x^2}{h_n(b,d,x)}
\Big(
\sum_{0 < i \leq r}
E^{(-i,-i)}
+t_n
E^{(r+1-i,r+1-i)}
\nonumber\\
{}
-
\sum_{0 < i \leq r}
E^{(r+1-i,-i)}
+t_n
E^{(-i,r+1-i)}
\Big).
\end{align}

The dual K-matrices read 
\begin{align}
&\widetilde{K}^{(2r+1)}_0 (x)
=\kappa_0^{(2r+1)}(x) \bigg{(}
\sum_{i=-r}^{0}
E^{(ii)} t^{-i}
+
\sum_{i=1}^{r}
E^{(ii)} t^{-r+i-1}
+
\frac{t^{-2 r} \left(x^2-t^{2 r+1}\right)}{h_0\left(a,c,x/t^{r}\right)} \times \nonumber \\
&\Big(
-\sum_{0 < i \leq r}
t_0 t^{i-1} E^{(-i,-i)}
+
t^{-i}E^{(r+1-i,r+1-i)}
+
\sum_{0 < i \leq r}
E^{(-i,r+1-i)}
+t^{-1} t_0
E^{(r+1-i,-i)}
\Big)\bigg{)},
\end{align}
and 
\begin{align}
&\widetilde{K}^{(2r+1)}_n (x)
=\kappa_n^{(2r+1)}(x) \bigg{(}
\sum_{i=-r}^{0}
E^{(ii)} t^{i}
+
\sum_{i=1}^{r}
E^{(ii)} t^{r-i+1}
-
\frac{t^{-r} \left(x^2-t^{2 r+1}\right)}{x^2 h_n\left(b,d,t^r/x\right)} \times \nonumber \\
&\Big(
\sum_{0 < i \leq r}
t^{r-i} E^{(-i,-i)}
+t^{r+i-1}t_n
E^{(r+1-i,r+1-i)}
-
\sum_{0 < i \leq r}
t^{r-1}t_n E^{(r+1-i,-i)}
+t^r
E^{(-i,r+1-i)}
\Big)\bigg{)},
\end{align}
where $\kappa_0^{(2r+1)}$ and $\kappa_n^{(2r+1)}$
\begin{align*}
\kappa_0^{(2r+1)}(x)=\frac{t^{2 r} \left(t-x^2\right) h_0\left(a,c,x/t^{r}\right)}{\left(t^{2 r+1}-x^2\right) h_0(a,c,x)},\qquad 
 \kappa_n^{(2r+1)}(x)=\frac{\left(t-x^2\right)
   h_n\left(b,d,t^r/x\right)}{\left(t^{2
   r+1}-x^2\right) h_n\left(b,d,1/x\right)}.
\end{align*}
Among the relations satisfied by the matrices $R^{(2r+1)}$, $K_0^{(2r+1)}$, and $K_n^{(2r+1)}$ the crossing unitarity condition changes to
\begin{align}\label{CUnitarityr}
(\mathbb{I}_{2r+1}\otimes U^{-1})R^{(2r+1)}_{2,1}\left(t^{2r+1}/x\right)^{\tau_1}(\mathbb{I}_3\otimes U)R_{1,2}^{(2r+1)}(x)^{\tau_1}=
\frac{(1-x) \left(t^{2r+1}-x\right)}{(t-x)\left(t^{2r}-x\right)}   \mathbb{I}_{2r+1}\otimes\mathbb{I}_{2r+1},
\end{align}
with $U=\text{diag}\{t^{-r},..,t^{-1},1,t,..,t^{r}\}$, while other relations remain unchanged as in previous subsection. The transfer matrix $T^{(2r+1)}$ is constructed in the same way as before according to (\ref{Mon1})--(\ref{Tmat}) and (\ref{Tmatd}). The $2r+1$-species ASEP Markov matrix is given by the derivative of the transfer matrix $T^{(2r+1)}$
\begin{align*}
L^{(2r+1)}=\frac{1-t}{2}T^{(2r+1)}{}'(1)=(1-t)\left(\frac{1}{2} K_0^{(2r+1)}{}'(1)+\sum_{i=1}^{n-1}\check{R}_{i,i+1}^{(2r+1)}{}'(1)-\frac{1}{2} K_n^{(2r+1)}{}'(1)\right).
\end{align*}
The exchange relations (\ref{RT})--(\ref{KnT}) also hold for the matrices $R^{(2r+1)}$, $K_0^{(2r+1)}$, and $K_n^{(2r+1)}$, respectively. Likewise the scattering matrices $S_i^{(2r+1)}$ are defined through  $T^{(2r+1)}(x)$.

\section{Relation to the Hecke algebra}
\label{se:Hecke}

\subsection{Weyl group}
\label{subsc:weyl}
The Weyl group of \textit{finite} type C 
\be
W_0 = \langle s_1,\ldots,s_n\rangle
\ee
acts on $\mathbb{R}^n$ as
\be
s_i (\lambda_1,\ldots,\lambda_n) =  (\lambda_1,\ldots\lambda_{i+1},\lambda_i,\ldots,\lambda_n),\qquad s_n (\lambda_1,\ldots,\lambda_n) =  (\lambda_1,\ldots,-\lambda_n).
\ee
The Weyl group $W_0$ is isomorphic to the group of signed permutations on $n$ symbols. Let $\lambda\in\mathbb{Z}^n$ be a composition and $w_\lambda\in W_0$ the shortest word so that $w_\lambda^{-1}\lambda=:\delta^+$ is a partition, the unique dominant weight. The word $w_\lambda\in W_0$ can be described as a signed permutation, $w_\lambda = \sigma_\lambda \pi_\lambda$, where $\sigma_\lambda=(\sign(\lambda_1),\ldots,\sign(\lambda_n))$ with $\sign(0)=1$, and $\pi_\lambda\in S_n$ is a permutation. 

The dominance order $\ge$ on $\mathbb{Z}^n$ is defined as
\be
\lambda\ge \mu\text{ if } \sum_{i=1}^k (\lambda_i - \mu_i) \ge 0\quad \text{for }k=1,\ldots,n,
\ee
and a partial order $\succeq$ is defined as
\be
\lambda \succeq \mu \text{ if } \lambda^+ >  \mu^+ \text{ or if } \lambda^+=\mu^+ \text{ and } \lambda\ge \mu.
\ee

The \textit{affine} Weyl group includes the generator $s_0$,
\be
W = \langle s_0,\ldots,s_n\rangle,
\ee
and has a natural faithful action on $\mathbb{R}^n$ where
\be
s_0  (\lambda_1,\ldots,\lambda_n) =  (-1-\lambda_1,\ldots,\lambda_n).
\ee

The simple transposition$s_i$ has a simple action on polynomials given by $s_i f(\ldots,x_i,x_{i+1},\ldots) = f(\ldots,x_{i+1},x_i,\ldots)$. Let also
\be
s_0f(x_1,\ldots)=f(q/x_1,\ldots),\qquad s_nf(\ldots,x_n)=f(\ldots,1/x_n).
\ee

\subsection{Exchange relations}
Fix a partition $\lambda$, then 
\begin{align}
\label{thetadef}
\bra{\theta_\lambda} := \sum_{\mu \in W_0\cdot\lambda } \bra{\mu}
\end{align}
is the Perron-Frobenius left eigenvector\footnote{The vector $\bra{\theta_\lambda}$ is just the row vector $(1,1,1,\ldots)$ on the subspace spanned by signed permutations of $\lambda$.} of the discrete time transfer matrix defined in Section~\ref{se:Tmat},
\be
\bra{\theta_\lambda} T(w;x_1,\ldots,x_n) = \Lambda_\lambda(w;x_1,\ldots,x_n)  \bra{\theta_\lambda} .
\ee
This result can be easily proved by observing that $\bra{\theta_\lambda}$ is a left eigenvector of $S_i$ (with eigenvalue 1), and that by commutativity of $S_i$ and $T$ it also has to be an eigenvector of $ T(w;x_1,\ldots,x_n)$. We define the corresponding Perron-Frobenius right eigenstate as
\begin{align}
\ket{\Psi_\lambda(x_1,\ldots,x_n)} &= \sum_{\mu \in W_0\cdot\lambda } f_{\mu_1,\ldots,\mu_n}(x_1,\ldots,x_n) \ket{\mu},\\
T(w;x_1,\ldots,x_n) \ket{\Psi_\lambda(x_1,\ldots,x_n)} &= \Lambda_\lambda(w;x_1,\ldots,x_n)  \ket{\Psi_\lambda(x_1,\ldots,x_n)}.
\end{align}

Let $\check{R}$ and $K_n$ be as define in \eqref{Rr} and \eqref{Knr}, and define a $q$-deformed modification of $K_0$ in \eqref{K0r} as 
\begin{align}
K^{(2r+1)}_0 (x)
=
\sum_{i=-r}^{r}
E^{(ii)}
+
\frac{q-x^2}{h_0(a,c,x)}
\Big(
\sum_{0 < i \leq r}
t_0 E^{(-i,-i)}
+
E^{(r+1-i,r+1-i)}
\nonumber\\
{}
-
\sum_{0 < i \leq r}
q^{-i}
E^{(-i,r+1-i)}
+q^{r+1-i} t_0
E^{(r+1-i,-i)}
\Big).
\end{align}
\begin{lemma}[Cantini, \cite{Cantini15}]
The inhomogeneous Perron-Frobenius eigenstate 
\be
\ket{\Psi(x_1,\ldots,x_n)}  = \sum_\mu f_{\mu_1,\ldots,\mu_n}(x_1,\ldots,x_n) \ket{\mu},
\ee
 is the $q=1$ solution of 
\begin{align}
\check{R}_i(x_{i+1}/x_i)\ \ket{\Psi(x_1,\ldots,x_n)} & =s_i\ket{\Psi(x_1,\ldots,x_n)}, \nonumber\\
K_0(x_1)\ \ket{\Psi(x_1,\ldots,x_n)} & =s_0 \ket{\Psi(x_1,\ldots,x_n)} ,
\label{eq:qKZ} \\
K_n(x_n)\ \ket{\Psi(x_1,\ldots,x_n)} & =s_n \ket{\Psi(x_1,\ldots,x_n)}. \nonumber
\end{align}
\label{eq:qKZlemma}
\end{lemma}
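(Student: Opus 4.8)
The plan is to derive the three identities in \eqref{eq:qKZ} directly from the defining eigenvalue equation $T(w;x)\ket{\Psi_\lambda(x)}=\Lambda_\lambda(w;x)\ket{\Psi_\lambda(x)}$ for the Perron--Frobenius eigenstate, using the bulk and boundary exchange relations \eqref{RT}--\eqref{KnT} to move the $R$- and $K$-matrices past it. Here $x=(x_1,\dots,x_n)$, and $\ket{\Psi(x)}$ in the lemma is the eigenstate $\ket{\Psi_\lambda(x)}=\sum_{\mu\in W_0\cdot\lambda}f_\mu(x)\ket{\mu}$ of the preceding subsection for a fixed dominant weight $\lambda$ indexing the sum; one uses that at $q=1$ the $q$-deformed $K_0$ coincides with \eqref{K0r} while $s_0$ acts on functions by $x_1\mapsto q/x_1=1/x_1$, so that \eqref{eq:qKZ} becomes a statement about exactly the matrices $\check R_i$, $K_0$, $K_n$ of \eqref{Rr}--\eqref{Knr}. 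Two preliminary observations: the operators $T(w;x)$, $\check R_i$, $K_0$, $K_n$ all leave $\mathcal H_\lambda:=\Span(W_0\cdot\lambda)$ invariant, since the bulk $R$-matrices only permute labels and the boundary $K$-matrices only flip signs at the endpoints, so the multiset $\{|\mu_i|\}$ is conserved; and, because $\check R_i$, $K_0$, $K_n$ are column-stochastic, on $\mathcal H_\lambda$ the constant covector $\bra{\theta_\lambda}$ of \eqref{thetadef} satisfies $\bra{\theta_\lambda}\check R_i(x_{i+1}/x_i)=\bra{\theta_\lambda}K_0(x_1)=\bra{\theta_\lambda}K_n(x_n)=\bra{\theta_\lambda}$.

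First I would show that $\Lambda_\lambda(w;\,\cdot\,)$ is $W$-invariant in the spectral variables: multiplying $\check R_i(x_{i+1}/x_i)\,T(w;x)=T(w;s_ix)\,\check R_i(x_{i+1}/x_i)$ on the left by $\bra{\theta_\lambda}$ and using $\bra{\theta_\lambda}T(w;y)=\Lambda_\lambda(w;y)\bra{\theta_\lambda}$ gives $\Lambda_\lambda(w;x)=\Lambda_\lambda(w;s_ix)$, and the same with \eqref{K0T} and \eqref{KnT} gives invariance under $s_0$ and $s_n$. Next, applying $\check R_i(x_{i+1}/x_i)$ to the eigenvalue equation and commuting it through with \eqref{RT} shows that $\check R_i(x_{i+1}/x_i)\ket{\Psi_\lambda(x)}$ is an eigenvector of $T(w;s_ix)$ with eigenvalue $\Lambda_\lambda(w;x)=\Lambda_\lambda(w;s_ix)$; likewise \eqref{K0T} and \eqref{KnT} handle $K_0(x_1)\ket{\Psi_\lambda(x)}$ and $K_n(x_n)\ket{\Psi_\lambda(x)}$. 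But $s_i\ket{\Psi_\lambda(x)}=\ket{\Psi_\lambda(s_ix)}$, $s_0\ket{\Psi_\lambda(x)}=\ket{\Psi_\lambda(s_0x)}$ and $s_n\ket{\Psi_\lambda(x)}=\ket{\Psi_\lambda(s_nx)}$ are, by definition, the Perron--Frobenius eigenvectors of $T(w;s_ix)$, $T(w;s_0x)$, $T(w;s_nx)$ with the very same eigenvalues. Since this eigenvalue is simple — implicit in the definition of $\ket{\Psi_\lambda}$, and for generic (in particular inhomogeneous) $x$ a consequence of continuity from $x=(1,\dots,1)$, where $T(w;x)|_{\mathcal H_\lambda}$ is for $w$ near $1$ a small perturbation of the identity whose leading eigenvalue is simple because $L|_{\mathcal H_\lambda}$ generates an irreducible Markov chain — it follows that
\begin{align}
\check R_i(x_{i+1}/x_i)\ket{\Psi_\lambda(x)}&=c_i(x)\,s_i\ket{\Psi_\lambda(x)},\nonumber\\
K_0(x_1)\ket{\Psi_\lambda(x)}&=c_0(x)\,s_0\ket{\Psi_\lambda(x)},\nonumber\\
K_n(x_n)\ket{\Psi_\lambda(x)}&=c_n(x)\,s_n\ket{\Psi_\lambda(x)},\nonumber
\end{align}
for some scalar rational functions $c_i,c_0,c_n$.

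To finish, pair each line with $\bra{\theta_\lambda}$: using the stochasticity identities above on the left and $\bra{\theta_\lambda}s\ket{\Psi_\lambda(x)}=s\,\langle\theta_\lambda|\Psi_\lambda(x)\rangle$ on the right, one obtains $c(x)=Z_\lambda(x)/(sZ_\lambda)(x)$ for the relevant $s\in\{s_i,s_0,s_n\}$, where $Z_\lambda(x):=\langle\theta_\lambda|\Psi_\lambda(x)\rangle=\sum_\mu f_\mu(x)$. Hence all three constants equal $1$ precisely when $Z_\lambda$ is $W$-invariant, and since $\ket{\Psi_\lambda}$ is fixed only up to an overall nonzero scalar function, one may normalise it so that this holds, for instance by $\langle\theta_\lambda|\Psi_\lambda(x)\rangle\equiv1$; with any such normalisation \eqref{eq:qKZ} follows. (If instead one wants the components $f_\mu$ to be coprime polynomials, the same conclusion holds once one knows that the corresponding $Z_\lambda$ is $W$-invariant, which is essentially the content of the main theorem.)

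The step I expect to require the most care is the simplicity of the Perron--Frobenius eigenvalue for $x$ away from the homogeneous point, which is what licenses passing from ``same eigenvalue'' to ``proportional''; the remaining manipulations are bookkeeping with the exchange relations and column-stochasticity. An alternative, closer to the Hecke-algebra viewpoint of Section~\ref{se:Hecke}, is to construct the solution of \eqref{eq:qKZ} (unique up to a $W$-invariant factor) recursively from a single reference component via Baxterised versions of $\check R_i$, $K_0$, $K_n$, then verify it is a $T(w;x)$-eigenvector by pushing the $R$- and $K$-matrices in $T(w;x)$ through it with \eqref{eq:qKZ}, and identify it with $\ket{\Psi_\lambda}$ by positivity at $x=(1,\dots,1)$; that route trades the spectral input for the braid and reflection consistency of the Baxterised operators and simultaneously yields the uniqueness that legitimises ``the'' solution.
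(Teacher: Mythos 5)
Your proposal is correct and takes essentially the same route as the paper, whose proof is the single sentence that the lemma ``follows from the exchange relations \eqref{RT}, \eqref{K0T} and \eqref{KnT}''; you have simply filled in the steps the paper leaves implicit (same eigenvalue via $W$-invariance of $\Lambda_\lambda$, proportionality via simplicity of the Perron--Frobenius eigenvalue, and fixing the constant to $1$ by pairing with the stochastic covector $\bra{\theta_\lambda}$). The points you flag as delicate --- simplicity of the eigenvalue for inhomogeneous $x$ and the normalisation of $\ket{\Psi_\lambda}$ --- are indeed the only places where the paper's one-line argument needs the supplementary reasoning you supply.
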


\begin{proof}
This follows from the exchange relations \eqref{RT}, \eqref{K0T} and \eqref{KnT} of $\check{R}$, $K_0$ and $K_n$ with the transfer matrix. 
\end{proof}

In the following we describe $\ket{\Psi(x_1,\ldots,x_n)}$ using the polynomial representation theory of the Hecke algebra.

\subsection{Hecke algebra of affine type C}
The Weyl group $W$ can be $t$-deformed to the Hecke algebra of affine type C. In the polynomial representation \cite{Noumi} the generators of the Hecke algebra are explicitly given by
\begin{align}
T_0 &= t_0  -  \frac{(x_1 -a)(x_1-c)}{x_1^2-q} (1-s_0),\nonumber \\
T_i  &= t - \frac{t x_i - x_{i+1}}{x_i-x_{i+1}} (1-s_i) \qquad (i=1,\ldots,n-1),\label{eq:Hecke}\\
T_n &= t_n  -  \frac{(bx_n -1)(dx_n-1)}{1-x_n^2} (1-s_n).\nonumber
\end{align}
where 
\be
t_0=-a cq^{-1},\qquad t_n=-b d.
\ee
It can be checked straightforwardly that the braid relations
\begin{align}
T_0T_1T_0T_1 &= T_1T_0T_1T_0,\label{eq:braid0}\\
T_iT_{i+1} T_i &= T_{i+1}T_iT_{i+1},\label{eq:braidi}\\
T_nT_{n-1}T_nT_{n-1} &= T_{n-1}T_nT_{n-1}T_n,\label{eq:braidn}
\end{align}
are satisfied  and so are the quadratic relations
\be
(T_i-t)(T_i+1) = 0, \quad (T_0-t_0)(T_0+1) = 0, \quad (T_n-t_n)(T_n+1) = 0. 
\label{eq:Hecke2}
\ee

It is convenient to define the following shifted operator (sometimes referred to as Baxterised operator),
\begin{equation}
T_i(u)  = T_i +\frac{1}{[u]},\qquad  [u]= \frac{1- t^{u}}{1-t}.
\label{eq:Demazure} 
\end{equation}
which satisfies the Yang--Baxter equation,
\begin{equation}
T_i(u)T_{i+1}(u+v)T_i(v) = T_{i+1}(v) T_i(u+v) T_{i+1}(u).
\end{equation}

\subsection{Non-symmetric Koornwinder polynomials}
The operators $Y_i$ defined by \cite{Sahi}
\be
Y_i = (T_i \ldots T_{n-1} )(T_n\cdots T_{0})(  T_{1}^{-1} \cdots T_{i-1}^{-1}).\quad (i=1,\ldots,n),
\ee
form an Abelian subalgebra, and symmetric functions of these operators are central elements of the Hecke algebra. The set of operators $Y_i$ therefore share a common set of eigenfunctions and in the polynomial representation these eigenfunctions are non-symmetric Koornwinder polynomials.

Following Kasatani \cite{Kasatani}, solutions to the exchange equations in Lemma~\ref{eq:qKZlemma} can be obtained from the anti-dominant non-symmetric Koornwinder polynomial in the following way. Let $\lambda\in\mathbb{Z}^n$ be a composition. Let $\delta$ be the antidominant weight of $\lambda$ in the partial order $\succeq$ on compositions, i.e. $\delta$ is a signed permutation of $\lambda$ such that $\delta_1 \leq \delta_2 \leq \ldots \leq \delta_n \le 0$. Let furthermore $\rho(\delta)= w_+ \rho$, $\rho=(n-1,n-2,\ldots,0)$ and $w_+$ is the shortest word in $W_0$ such that $\delta = w_+ \delta^+$ where $\delta^+$ is the dominant weight. 

\begin{definition}
The non-symmetric Koornwinder polynomial $E_\lambda$ is the unique polynomial which solves the eigenvalue equations
\begin{equation}
Y_i E_\lambda = y_i(\lambda) E_\lambda \qquad (i=1,\ldots,n),
\end{equation}
where
\be
y_i(\lambda) =  q^{\lambda_i} t^{n-i+\rho(\lambda)_i} (t_0t_n)^{\epsilon_i(\lambda)}  ,\qquad \epsilon_i(\lambda)= \left\{ \begin{array} {lc} 1 & \lambda_i \ge 0 \\ 0 & \lambda_i <0 \end{array}\right.
\ee
and whose coefficient of the term $x^\lambda=x_1^{\lambda_1}\cdots x_n^{\lambda_n}$ is equal to 1.
\end{definition}

\begin{definition}
Denote by $\mathcal{R}=\mathbb{F}[x_1^{\pm},\ldots,x_n^{\pm 1}]$ the ring of Laurent polynomials in $n$ variables. The space $\mathcal{R}^\lambda$ is the subspace of $\mathcal{R}$ spanned by $\{E_\mu| \mu\in W_0\cdot\lambda\}$.
\end{definition}

Define
\begin{align}
f_\delta &:= E_\delta,\nonumber \\
f_{\ldots,\lambda_i,\lambda_{i+1},\ldots} &:=  T_i^{-1} f_{\ldots,\lambda_{i+1},\lambda_{i},\ldots} \qquad \lambda_i > \lambda_{i+1},
\label{qKZ1} \\
f_{\ldots,\lambda_{n-1},\lambda_n} &:= T_n^{-1} f_{\ldots,\lambda_{n-1},-\lambda_{n}}, \qquad \lambda_n > 0.
\nonumber
\end{align}
Then $f$ solves the following equations \cite{Kasatani},
\begin{align}
T_0 f_{\lambda_1,\ldots} &= q^{\lambda_1} f_{-\lambda_1,\ldots} \qquad \lambda_1 < 0,\nonumber
\\
T_0 f_{\lambda_1,\ldots} &= t_0 f_{\lambda_1,\ldots} \quad\qquad \lambda_1 =0,\nonumber
\\
T_i f_{\ldots,\lambda_i,\lambda_{i+1},\ldots} &= t f_{\ldots,\lambda_i,\lambda_{i+1},\ldots}\qquad \lambda_i = \lambda_{i+1},\nonumber
\\
T_i f_{\ldots,\lambda_i,\lambda_{i+1},\ldots} &=  f_{\ldots,\lambda_{i+1},\lambda_i,\ldots}\qquad \lambda_i > \lambda_{i+1},\label{qKZC1b}\\
T_n f_{\ldots,\lambda_n} &= t_n f_{\ldots,\lambda_n},\qquad \lambda_n=0,\nonumber
\\
T_n f_{\ldots,\lambda_{n}} &= f_{\ldots,-\lambda_n},\qquad \lambda_n > 0.\nonumber
\end{align}

\begin{lemma}
Equations \eqref{qKZC1b} are equivalent to \eqref{eq:qKZ}.
\end{lemma}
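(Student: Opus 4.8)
The plan is to establish the claimed equivalence by translating the Baxterised Hecke operators $T_i$ and the boundary operators $T_0$, $T_n$ back into the $\check R$- and $K$-operators of the exclusion process, and then matching the two systems of exchange relations term by term. The key observation is that the $q=1$ specialisation of the polynomial-representation generators \eqref{eq:Hecke} are precisely the local operators appearing in \eqref{RT}--\eqref{KnT}, acting on the $W_0$-orbit of a composition. Concretely, comparing \eqref{eq:Hecke} with the explicit form of $\check R_i(x_{i+1}/x_i)$ in \eqref{Rr} and of $K_0(x_1)$, $K_n(x_n)$ in \eqref{K0r}, \eqref{Knr}, one checks that on the finite-dimensional space $\mathcal{R}^\lambda$ spanned by $\{E_\mu\mid \mu\in W_0\cdot\lambda\}$ (equivalently, on $\ket{\Psi(x_1,\ldots,x_n)}$), the Hecke generator $T_i$ acts as $s_i$ followed by $\check R_i(x_{i+1}/x_i)$, and similarly $T_0$ and $T_n$ implement the boundary reflections dressed by $K_0$, $K_n$. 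In formulas: $\check R_i(x_{i+1}/x_i) = T_i s_i$ and $K_0(x_1) = T_0 s_0$, $K_n(x_n) = T_n s_n$ at $q=1$, where the product is understood as operators acting on $\mathcal{R}^\lambda$.

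First I would make precise the dictionary between the two labellings: in \eqref{eq:qKZ} the components $f_{\mu_1,\ldots,\mu_n}$ of $\ket{\Psi}$ are indexed by a signed permutation $\mu$ of $\lambda$ with $\mu_i\in\{-r,\ldots,r\}$, whereas in \eqref{qKZC1b} the $f_\mu$ are Laurent polynomials forming $\mathcal{R}^\lambda$; the identification sends the ket $\ket{\mu}$-component to the polynomial $f_\mu$, so that the $\check R_i$, $K_0$, $K_n$ matrix entries become exactly the coefficients appearing in \eqref{eq:Hecke}. Next I would verify, generator by generator, that each line of \eqref{qKZC1b} is the same statement as the corresponding line of \eqref{eq:qKZ}. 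For the bulk generator $T_i$: the two cases $\lambda_i=\lambda_{i+1}$ and $\lambda_i>\lambda_{i+1}$ in \eqref{qKZC1b} together with the recursion \eqref{qKZ1} (which defines $f$ for $\lambda_i<\lambda_{i+1}$ by $f_{\ldots\lambda_i,\lambda_{i+1}\ldots}=T_i^{-1}f_{\ldots\lambda_{i+1},\lambda_i\ldots}$) encode, after using the quadratic relation $(T_i-t)(T_i+1)=0$ from \eqref{eq:Hecke2}, precisely the linear system satisfied by the two components of $\check R_i(x_{i+1}/x_i)\ket{\Psi}=s_i\ket{\Psi}$ restricted to the pair of values $\{\lambda_i,\lambda_{i+1}\}$. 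The boundary generators $T_n$ and $T_0$ are handled identically using $(T_n-t_n)(T_n+1)=0$ and $(T_0-t_0)(T_0+1)=0$ together with \eqref{qKZ1}, giving the $K_n(x_n)\ket{\Psi}=s_n\ket{\Psi}$ and $K_0(x_1)\ket{\Psi}=s_0\ket{\Psi}$ equations respectively, where one must also match the parametrisations $t_0=-ac$ (at $q=1$), $t_n=-bd$ and the functions $h_0$, $h_n$ with the entries of $K_0$, $K_n$.

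The main obstacle I anticipate is bookkeeping rather than conceptual: one must carefully handle the multi-species ($r>1$) structure, where each generator acts nontrivially on a $3$-dimensional (bulk) or $2$-dimensional (boundary, for each pair $\pm m$) invariant subspace indexed by the possible unordered pairs of adjacent values, and confirm that the off-diagonal entries $b^\pm$, $c^\pm$ of $\check R^{(2r+1)}$ and the entries $-q^{-i}$, $q^{r+1-i}t_0$ of the $q$-deformed $K_0$ reduce at $q=1$ to exactly the coefficients produced by inverting $T_i$, $T_n$, $T_0$ via their quadratic relations. A secondary subtlety is that \eqref{eq:qKZ} is stated as a genuine eigenvector equation on the whole orbit simultaneously, while \eqref{qKZC1b} is a collection of local relations among individual components $f_\mu$; the equivalence in both directions follows because the recursions \eqref{qKZ1} generate every $f_\mu$ from $f_\delta=E_\delta$ along reduced words in $W_0$, and the braid relations \eqref{eq:braid0}--\eqref{eq:braidn} guarantee consistency (independence of the reduced word chosen), which is the same consistency ensured on the process side by the Yang--Baxter and reflection equations \eqref{YBcheck}, \eqref{ReflK0Check}, \eqref{ReflKnCheck}.
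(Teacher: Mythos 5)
Your proposal is correct and is essentially the paper's own argument: the authors simply assert that the lemma ``follows by a straightforward check,'' and your plan carries out exactly that check, namely expanding $\check{R}_i\ket{\Psi}=s_i\ket{\Psi}$, $K_0\ket{\Psi}=s_0\ket{\Psi}$, $K_n\ket{\Psi}=s_n\ket{\Psi}$ into component relations and matching them, via the quadratic relations \eqref{eq:Hecke2} and the recursion \eqref{qKZ1}, with the Hecke relations \eqref{qKZC1b}. The only cosmetic caveat is that the shorthand identities such as $\check{R}_i(x_{i+1}/x_i)=T_i s_i$ should be read as the componentwise intertwining you subsequently describe (and the equivalence holds for general $q$, not only $q=1$), but this does not affect the validity of the verification.
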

\begin{proof}
This lemma follows by a straightforward check.
\end{proof}

\begin{remark}
The first of equations \eqref{qKZC1b} for the case $$\lambda=\delta=((-n+1)^{d_{n-1}},\ldots,(-1)^{d_1}0^{d_0}),$$ with $n=d_0+d_1+\ldots +d_{n-1}$, follows from
\begin{align}
T_0 E_\delta &= T_1^{-1} \ldots T_n^{-1} T_{n-1}^{-1} \ldots T_1^{-1} Y_1 E_\delta =  t^{n-1+\rho(\delta)_1} q^{\delta_1}  T_1^{-1} \ldots T_n^{-1} T_{n-1}^{-1} \ldots T_1^{-1} E_\delta  \nonumber\\
&=  t^{n-1+\rho(\delta)_1-d_{n-1}} q^{\delta_1}  T_1^{-1} \ldots T_n^{-1} E_{\delta_2,\ldots,\delta_n,\delta_1} \nonumber\\
 &= q^{\delta_1}  T_1^{-1} \ldots T_{n-1}^{-1} E_{\delta_2,\ldots,\delta_n,-\delta_1} = q^{.\delta_1} E_{-\delta_1,\delta_2,\ldots,\delta_n} ,
\end{align}
where we made use of the fact that $\rho(\delta)_1=-(n-1-d_{n-1})$.
\end{remark}

Writing $q=t^{u}$ and $t_0t_n=t^v$ we define the elements of the \textit{spectral vector}  $\langle \lambda\rangle$ of a composition $\lambda$  as,
\be
\langle \lambda\rangle_i = \rho_i(\lambda)+ u \lambda_i + v\epsilon_i(\lambda), \qquad y_i(\lambda) = t^{\langle \lambda\rangle_i }.
\ee 
The non-symmetric Macdonald polynomials are obtained from $E_\delta$ by the action of Baxterised operators:
\begin{align}
t E_{s_i\lambda} &=  T_i(\langle \lambda\rangle_{i+1} -\langle \lambda\rangle_{i} ) E_\lambda,\qquad \lambda_i < \lambda_{i+1},
\label{eq:TionE}\\
t_n E_{s_n\lambda} &= \left[T_n + \frac{1-t_n+t_n(1-t_0)y_n(\lambda)^{-1}}{t_0t_ny_n(\lambda)^{-2}-1} \right] E_\lambda,\qquad \lambda_n<0\label{eq:TnonE}
\end{align}

\begin{prop}
\label{prop:changeofbasis}
The families of polynomials $E_\mu$ and $f_\mu$ are related via an invertible triangular change of basis:
\begin{align}
\label{eq:triang}
E_{\lambda}
=
\sum_{\mu \leq \lambda}
c_{\lambda\mu}(q,t)
f_{\mu},
\qquad
f_{\lambda}
=
\sum_{\mu \leq \lambda}
d_{\lambda\mu}(q,t)
E_{\mu}
\end{align}
for suitable rational coefficients $c_{\lambda\mu}(q,t)$ and $d_{\lambda\mu}(q,t)$.
\end{prop}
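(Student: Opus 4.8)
The plan is to prove the second identity in \eqref{eq:triang} by induction along the recursion \eqref{qKZ1} that defines the family $\{f_\mu\}$, expanding each $f_\lambda$ in the basis $\{E_\mu\}$, and then to deduce the first identity by inverting a triangular matrix. Throughout I would fix a $W_0$-orbit and work inside it: both $\{E_\mu:\mu\in W_0\cdot\lambda\}$ and $\{f_\mu:\mu\in W_0\cdot\lambda\}$ lie in $\mathcal{R}^\lambda$, and the partial order governing the triangularity is most naturally the Bruhat order on the orbit $W_0/\mathrm{Stab}(\delta)$, which refines the dominance order $\le$; establishing triangularity with respect to Bruhat hence gives the statement as written.

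Two ingredients are needed. The first is the intertwining structure of the $E_\mu$: reading \eqref{eq:TionE} and \eqref{eq:TnonE} together with the Baxterisation \eqref{eq:Demazure}, one sees that for every $\nu$ and every $i\in\{1,\dots,n\}$ the element $T_iE_\nu$ lies in $\Span\{E_\nu,E_{s_i\nu}\}$, with the coefficient of $E_{s_i\nu}$ a nonzero rational function whenever $s_i\nu$ lies strictly above $\nu$ in the orbit; and from the quadratic relations \eqref{eq:Hecke2} one has $T_i^{-1}=\tfrac1t\big(T_i-(t-1)\big)$, so $T_i^{-1}E_\nu$ lies in the same two-dimensional span, and likewise for $T_n^{-1}$. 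In particular $T_i$ preserves $\mathcal{R}^\lambda$, hence so does $T_i^{-1}$. The second ingredient is the lifting property of the Bruhat order: if $\nu\le\kappa$ then $\max(\nu,s_i\nu)\le\max(\kappa,s_i\kappa)$. This is exactly what prevents a step of the recursion from producing an $E_\mu$ whose index leaves the down-set of the new label.

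With these in place the induction is the expected one. The base case is $f_\delta=E_\delta$. For the step, suppose $\lambda_i>\lambda_{i+1}$ for some $i<n$, so that $s_i\lambda$ lies strictly below $\lambda$ and $f_\lambda=T_i^{-1}f_{s_i\lambda}$; assuming inductively that $f_{s_i\lambda}=\sum_{\nu\le s_i\lambda}d_{s_i\lambda,\nu}E_\nu$ with nonzero diagonal coefficient, one applies $T_i^{-1}$ termwise and uses the two ingredients to see that every $T_i^{-1}E_\nu$ with $\nu\le s_i\lambda$ expands into $E_\mu$ with $\mu\le\lambda$, that the only source of $E_\lambda$ is the term $\nu=s_i\lambda$, and that by \eqref{eq:TionE} the coefficient of $E_\lambda$ in $T_i^{-1}E_{s_i\lambda}$ equals $1$; hence $d_{\lambda\lambda}=d_{s_i\lambda,s_i\lambda}\ne 0$ (indeed $d_{\lambda\lambda}=1$). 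The case $\lambda_n>0$, with $f_\lambda=T_n^{-1}f_{s_n\lambda}$, is handled identically using \eqref{eq:TnonE}. This yields $f_\lambda=\sum_{\mu\le\lambda}d_{\lambda\mu}(q,t)E_\mu$ with $d_{\lambda\lambda}\ne0$. Ordering the orbit by a linear extension of $\le$, the matrix $(d_{\lambda\mu})$ is triangular with nonvanishing diagonal, hence invertible over $\mathbb{F}$ with triangular inverse, and reading off $C=D^{-1}$ gives $E_\lambda=\sum_{\mu\le\lambda}c_{\lambda\mu}(q,t)f_\mu$.

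The one place that genuinely needs attention — and the main obstacle — is the compatibility of the two constructions with a single order: one must confirm that both the $f$-recursion and the $E$-intertwiners climb the orbit in the same direction out of $\delta$, which reduces to the lifting property above, and one must treat the degenerate cases $\lambda_i=\lambda_{i+1}$ and $\lambda_n=0$, where $s_i$ or $s_n$ fixes the label and the corresponding generator acts on $E_\nu$ by the scalar $t$ or $t_n$ respectively, so those terms simply stay in place. The remaining verifications are routine computations inside the Hecke algebra.
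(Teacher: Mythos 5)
Your proposal is correct and follows essentially the same route as the paper, whose proof is simply the observation that the claim "follows directly from \eqref{eq:TionE}, \eqref{eq:TnonE}, the Baxterised operators \eqref{eq:Demazure} and the recursion \eqref{qKZ1}"; you have merely filled in the induction, the two-dimensional spans $\Span\{E_\nu,E_{s_i\nu}\}$, the lifting property for the order, and the computation showing the diagonal coefficient equals $1$. No discrepancy with the paper's argument.
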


\begin{proof}
This follows directly from \eqref{eq:TionE} and \eqref{eq:TnonE} and the definition of $T_i(u)$ in \eqref{eq:Demazure}, together with the definition of $f_\mu$ in \eqref{qKZ1}.
\end{proof}

\begin{cor}
The set of polynomials $\{f_\mu | \mu\in W_0\cdot \lambda\}$ form a basis in the ring $\mathcal{R}^\lambda$.
\end{cor}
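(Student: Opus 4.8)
The plan is to deduce the Corollary directly from Proposition~\ref{prop:changeofbasis}, so the proof is essentially a bookkeeping argument. First I would record the two structural facts about $\mathcal{R}^\lambda$ that are needed. By definition $\mathcal{R}^\lambda=\Span\{E_\mu\mid \mu\in W_0\cdot\lambda\}$, and these spanning vectors are linearly independent: the Koornwinder polynomial $E_\mu$ has leading monomial $x^\mu$ with coefficient $1$, and the exponents $x^\mu$ for distinct $\mu\in W_0\cdot\lambda$ are pairwise distinct. Hence $\{E_\mu\mid\mu\in W_0\cdot\lambda\}$ is already a basis of $\mathcal{R}^\lambda$, of cardinality $|W_0\cdot\lambda|$, and it suffices to show that $\{f_\mu\mid\mu\in W_0\cdot\lambda\}$ is obtained from it by an invertible linear transformation of $\mathcal{R}^\lambda$.

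Next I would check that the triangular expansions \eqref{eq:triang} of Proposition~\ref{prop:changeofbasis} take place entirely inside the single Weyl orbit $W_0\cdot\lambda$. This is immediate from the construction: $f_\mu$ is produced from $f_\delta=E_\delta$ by successive applications of the operators $T_i^{-1}$ (and $T_n^{-1}$) as in \eqref{qKZ1}, and by \eqref{eq:TionE}--\eqref{eq:TnonE} each such operator sends $E_\nu$ to a linear combination of $E_\nu$ and $E_{s_i\nu}$ (resp. $E_{s_n\nu}$), i.e. of Koornwinder polynomials indexed by the same orbit. Running the recursion, only $E_\nu$ with $\nu\in W_0\cdot\lambda$ can appear in $f_\mu$, and symmetrically only $f_\nu$ with $\nu\in W_0\cdot\lambda$ appear in $E_\mu$. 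Therefore the coefficient arrays $(c_{\lambda\mu})$ and $(d_{\lambda\mu})$ of Proposition~\ref{prop:changeofbasis} are finite square matrices indexed by $W_0\cdot\lambda$, triangular with respect to the dominance order restricted to the orbit, and mutually inverse; in particular every $f_\mu$ with $\mu\in W_0\cdot\lambda$ lies in $\mathcal{R}^\lambda$.

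Finally, since the matrix $(d_{\lambda\mu})$ expressing the $f_\mu$ in the basis $\{E_\mu\}$ of $\mathcal{R}^\lambda$ is invertible (with inverse $(c_{\lambda\mu})$, again by Proposition~\ref{prop:changeofbasis}), the change of basis $E_\mu\mapsto f_\mu$ is an invertible endomorphism of $\mathcal{R}^\lambda$, and an invertible linear map carries a basis to a basis; hence $\{f_\mu\mid\mu\in W_0\cdot\lambda\}$ is a basis of $\mathcal{R}^\lambda$. I do not expect a genuine obstacle here: the Corollary is a formal consequence of the Proposition, and the only point that merits an explicit line is the orbit-closedness used above, namely that the sums in \eqref{eq:triang} never leave $W_0\cdot\lambda$, which follows at once from the way the $f_\mu$ are generated from $E_\delta$.
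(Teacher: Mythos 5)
Your proposal is correct and follows essentially the same route as the paper: the paper's proof is the one-line observation that $\{E_\mu \mid \mu\in W_0\cdot\lambda\}$ is a basis of $\mathcal{R}^\lambda$ (cited from Sahi rather than rederived via leading monomials) combined with the invertible triangular change of basis of Proposition~\ref{prop:changeofbasis}. Your additional remarks on orbit-closedness and the invertibility of the triangular coefficient matrices are exactly the details the paper leaves implicit.
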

\begin{proof}
Since the set $\{E_\mu| \mu\in W_0\cdot \lambda\}$ is a basis for $\mathcal{R}^\lambda$ \cite{Sahi}, the statement follows from Prop.~\ref{prop:changeofbasis}.
\end{proof}

\section{Symmetric Koornwinder polynomials}

We relate our results to symmetric Koornwinder polynomials \cite{Koornwinder92,Diejen}. 
\begin{lemma}
\label{le:symmetry}
Let $\lambda$ be a dominant composition, i.e. a partition. Then the sum 
\be
\mathcal{K}_\lambda(x_1,\ldots,x_n;q,t) = \sum_{\mu \in W_0\cdot \lambda} f_{\mu} (x_1,\ldots,x_n;q,t),
\ee 
 is $W_0$-invariant. Here the sum runs through all distinct elements in the $W_0$-orbit of $\lambda$.
\end{lemma}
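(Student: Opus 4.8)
The plan is to verify $W_0$-invariance generator by generator, i.e. to show that $\mathcal{K}_\lambda$ is fixed by each $s_i$ for $i=1,\ldots,n$. Since $W_0$ is generated by these simple reflections, invariance under all of $W_0$ follows. The key tool is the system of equations \eqref{qKZC1b} satisfied by the family $\{f_\mu\}$, together with the quadratic Hecke relations \eqref{eq:Hecke2}, which allow one to convert the statements ``$T_i f_\mu = t f_\mu$ or $f_{s_i\mu}$'' into statements about the action of $s_i$.

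First I would treat the bulk reflections $s_i$ for $i=1,\ldots,n-1$. From the definition \eqref{eq:Hecke} one has $s_i = 1 - \frac{x_i-x_{i+1}}{tx_i-x_{i+1}}(T_i - t)$ (solving for $s_i$ in terms of $T_i$), so that $s_i f_\mu = f_\mu$ precisely when $(T_i-t)f_\mu = 0$, while in general $s_i$ pairs up $f_\mu$ and $f_{s_i\mu}$. Concretely, using \eqref{qKZC1b}: when $\mu_i = \mu_{i+1}$ we get $s_i f_\mu = f_\mu$; when $\mu_i > \mu_{i+1}$ we have $T_i f_\mu = f_{s_i\mu}$ and the quadratic relation $(T_i - t)(T_i+1)=0$ gives $T_i f_{s_i\mu} = (t-1)f_{s_i\mu} + t f_\mu$ (equivalently $T_i^{-1} f_\mu = t^{-1}f_{s_i\mu}$ rearranged), from which a short computation yields $s_i(f_\mu + f_{s_i\mu}) = f_\mu + f_{s_i\mu}$. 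Summing over the $W_0$-orbit of $\lambda$, the terms organise into $s_i$-fixed singletons (those $\mu$ with $\mu_i=\mu_{i+1}$) and $s_i$-invariant pairs $\{f_\mu, f_{s_i\mu}\}$, so $s_i \mathcal{K}_\lambda = \mathcal{K}_\lambda$.

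Next I would handle the boundary reflection $s_n$. Again solve \eqref{eq:Hecke} for $s_n$ in terms of $T_n$, namely $s_n = 1 - \frac{1-x_n^2}{(bx_n-1)(dx_n-1)}(T_n-t_n)$, so $s_n f_\mu = f_\mu$ iff $(T_n - t_n)f_\mu = 0$. When $\mu_n = 0$, \eqref{qKZC1b} gives $T_n f_\mu = t_n f_\mu$, hence $s_n f_\mu = f_\mu$. When $\mu_n > 0$, \eqref{qKZC1b} gives $T_n f_\mu = f_{s_n\mu}$, and the quadratic relation $(T_n - t_n)(T_n+1)=0$ determines $T_n f_{s_n\mu}$, from which one checks $s_n(f_\mu + f_{s_n\mu}) = f_\mu + f_{s_n\mu}$. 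As before, summing over the orbit pairs up $\mu$ with $\mu_n>0$ and $s_n\mu$ with $(s_n\mu)_n<0$, leaving the $\mu_n=0$ terms fixed, so $s_n\mathcal{K}_\lambda = \mathcal{K}_\lambda$.

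The main obstacle, modest as it is, is the bookkeeping for the paired terms: one must make sure the elementary $2\times 2$ identity ``if $T e_1 = e_2$ and $T e_2 = (t-1)e_2 + t e_1$ then $(1-\tfrac{c}{d}(T-t))(e_1+e_2) = e_1+e_2$ for the relevant scalar ratio $c/d$'' is set up correctly, i.e. that the rational prefactors multiplying $(T_i-t)$ in the expression for $s_i$ really do produce a symmetric combination — this hinges on the precise form of the coefficient $\frac{tx_i-x_{i+1}}{x_i-x_{i+1}}$ (resp. $\frac{(bx_n-1)(dx_n-1)}{1-x_n^2}$) in \eqref{eq:Hecke}, and is exactly the computation that underlies the usual statement that $1+s_i$-type symmetrisers intertwine with Hecke symmetrisers. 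One should also note that restricting the sum to \emph{distinct} elements of the orbit (as stated) is what makes the pairing well-defined when $\lambda$ has repeated parts or zero parts fixed by some $s_i$; in that case the relevant $f_\mu$ simply satisfies the eigenvalue form of \eqref{qKZC1b} and is individually fixed. Assembling these per-generator statements gives $w\mathcal{K}_\lambda = \mathcal{K}_\lambda$ for all $w\in W_0$.
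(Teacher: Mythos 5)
Your proposal is correct and follows essentially the same route as the paper: both arguments reduce $W_0$-invariance to showing $(T_i-t)\mathcal{K}_\lambda=0$ and $(T_n-t_n)\mathcal{K}_\lambda=0$ via the relations \eqref{qKZC1b} and the quadratic Hecke relations \eqref{eq:Hecke2}, the only (cosmetic) difference being that you organise the orbit into $s_i$-fixed singletons and invariant pairs while the paper sums the three cases at once. Note a harmless sign slip: from \eqref{eq:Hecke} one gets $s_i = 1 + \frac{x_i-x_{i+1}}{tx_i-x_{i+1}}(T_i-t)$ rather than with a minus sign, which does not affect the conclusion since only the vanishing of $(T_i-t)$ on each pair matters.
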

\begin{proof}
We need to show that $T_i \mathcal{K}_\lambda = t \mathcal{K}_\lambda$ for all $i=1,\ldots,n-1$ and that $T_n \mathcal{K}_\lambda = t_n \mathcal{K}_\lambda$. From \eref{qKZ1} and \eref{eq:Hecke} we find for $\lambda_i < \lambda_{i+1}$ that
\begin{align}
T_i f_{\ldots,\lambda_i,\lambda_{i+1},\ldots} &= T_i^2  f_{\ldots,\lambda_{i+1},\lambda_{i},\ldots} = \left(t+(t-1)T_i \right)  f_{\ldots,\lambda_{i+1},\lambda_{i},\ldots} \nonumber\\
&= t f_{\ldots,\lambda_{i+1},\lambda_{i},\ldots} +(t-1)  f_{\ldots,\lambda_{i},\lambda_{i+1},\ldots}.
\end{align}
Combining this with \eref{eq:Hecke2} we thus find
\begin{align}
T_i \sum_\mu f_\mu &= \sum_{\mu:\ \mu_i < \mu_{i+1}} \left( tf_{s_i\mu} +(t-1)f_\mu \right) + \sum_{\mu:\ \mu_i = \mu_{i+1}} t f_\mu + \sum_{\mu:\ \mu_i > \mu_{i+1}} f_{s_i\mu} \nonumber\\
&= \sum_{\mu:\ \mu_i < \mu_{i+1}} tf_{s_i\mu} + \sum_{\mu:\ \mu_i \le \mu_{i+1}} tf_\mu = t \sum_\mu f_\mu.
\end{align}

Likewise, for $\lambda_n>0$
\begin{align}
T_n f_{\ldots,-\lambda_n} &= T_n^2  f_{\ldots,\lambda_{n}} = \left(t_n+(t_n-1)T_i \right)  f_{\ldots,\lambda_{n}} \nonumber\\
&= t_n f_{\ldots,\lambda_{n}} +(t_n-1)  f_{\ldots,-\lambda_{n}},
\end{align}
and therefore
\begin{align}
T_n \sum_\mu f_\mu &= \sum_{\mu:\ \mu_n < 0} \left( t_n f_{s_n\mu} +(t_n-1)f_\mu \right) + \sum_{\mu:\ \mu_n = 0} t_n f_\mu + \sum_{\mu:\ \mu_n > 0} f_{s_n \mu} \nonumber\\
&= \sum_{\mu:\ \mu_n < 0 } t_nf_{s_n\mu} + \sum_{\mu:\ \mu_n \le 0}  t_n f_\mu = t_n \sum_\mu f_\mu.
\end{align}
\end{proof}

The Koornwinder polynomial $K_{\lambda}$ is the unique $W$-symmetric polynomial (up to normalisation) which can be obtained by taking linear combinations of the non-symmetric Koornwinder polynomials $E_{\mu}$, where $\mu$ is a signed permutation of $\lambda$. As $\{f_\mu\}$ is a basis for $\mathcal{R}^\lambda$ it follows that $\mathcal{K}_{\lambda}=K_{\lambda}$.

\begin{thm}
\label{th:ZK}
The normalisation of the stationary state of the $2r+1$-species asymmetric exclusion process with open boundary conditions is a specialisation of a Koornwinder polynomial at $q=1$, i.e. 
\be
Z_{\lambda}(t,a,b,c,d) = K_\lambda(1^n;q=1,t;a,b,c,d).
\ee
\end{thm}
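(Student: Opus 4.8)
The plan is to identify the stationary state normalisation $Z_\lambda$ with the evaluation at $x_1=\cdots=x_n=1$ of the symmetric Koornwinder polynomial $\mathcal{K}_\lambda = K_\lambda$ constructed in Lemma~\ref{le:symmetry}, specialised to $q=1$. The starting point is that by Lemma~\ref{eq:qKZlemma} the inhomogeneous Perron--Frobenius right eigenvector $\ket{\Psi(x_1,\ldots,x_n)}$ of the transfer matrix satisfies the exchange equations \eqref{eq:qKZ} at $q=1$, and by the Lemma following Proposition~\ref{prop:changeofbasis} these are equivalent to the equations \eqref{qKZC1b} solved by the family $f_\mu$ built from the antidominant non-symmetric Koornwinder polynomial $E_\delta$. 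Hence, up to overall normalisation, the coefficient $f_{\mu_1,\ldots,\mu_n}(x_1,\ldots,x_n)$ appearing in $\ket{\Psi_\lambda}$ coincides with the polynomial $f_\mu(x_1,\ldots,x_n;q=1,t;a,b,c,d)$; this requires checking that the solution space of \eqref{eq:qKZ} inside $\mathcal{R}^\lambda$ is one-dimensional, which follows because $\{f_\mu\}$ is a basis of $\mathcal{R}^\lambda$ (the Corollary after Proposition~\ref{prop:changeofbasis}) and the eigenvector is unique up to scalar by Perron--Frobenius.

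Next I would recall the standard fact that the unnormalised stationary weights of the homogeneous mASEP are recovered from the inhomogeneous eigenvector by setting all inhomogeneities $x_j=1$: since at $x_j = 1$ the transfer matrix $T(1;1,\ldots,1)=\mathbb{I}$ and $L = \tfrac{1-t}{2}T'(1)$ commutes with $T(w)$ (Theorem~\ref{th:commut}), the vector $\ket{\Psi_\lambda(1,\ldots,1)}$ is the stationary state of $L$ restricted to the sector $W_0\cdot\lambda$. Therefore the normalisation is
\be
Z_\lambda(t,a,b,c,d) = \sum_{\mu\in W_0\cdot\lambda} f_{\mu_1,\ldots,\mu_n}(1^n) = \sum_{\mu\in W_0\cdot\lambda} f_\mu(1^n;q=1,t;a,b,c,d) = \mathcal{K}_\lambda(1^n;q=1,t;a,b,c,d),
\ee
where the last equality is the definition of $\mathcal{K}_\lambda$ in Lemma~\ref{le:symmetry}. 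Finally, invoking the identification $\mathcal{K}_\lambda = K_\lambda$ stated just before the theorem gives $Z_\lambda = K_\lambda(1^n;q=1,t;a,b,c,d)$, as claimed. The remaining point is to confirm that the normalisation of $\ket{\Psi_\lambda}$ implicit in this construction — namely that its entries are literally the $f_\mu$, with $f_\delta = E_\delta$ normalised to have leading coefficient $1$ — is the one that yields the combinatorial (sum-of-stationary-weights) normalisation $Z_\lambda$; this is a matter of fixing the scalar by comparing one distinguished component.

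The main obstacle I anticipate is precisely this matching of normalisations and, relatedly, making sure the $q=1$ specialisation of the $f_\mu$ is well defined (no spurious poles in the triangular change of basis of Proposition~\ref{prop:changeofbasis} at $q=1$) and that the resulting vector is genuinely the probabilistic stationary state rather than merely a left-null vector up to sign. Establishing that the coefficients $f_\mu(1^n;q=1,t)$ are nonnegative for the physical range of $a,b,c,d$ — so that they really are the stationary probabilities up to the common factor $Z_\lambda$ — is the delicate part; one route is to use the explicit action of the Hecke generators $T_i^{-1}$ in \eqref{qKZ1} together with the boundary relations \eqref{qKZC1b} to argue inductively that each $f_\mu$, suitably normalised, is obtained from $E_\delta$ by operators that preserve positivity at $q=1$, mirroring the rank-one Askey--Wilson case of \cite{Cantini15,CorteelW11}. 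Once the identification of entries is secured, the theorem is immediate from Lemma~\ref{le:symmetry} and the evaluation $x_j=1$.
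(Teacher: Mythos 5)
Your proposal follows exactly the chain of reasoning the paper intends (and leaves largely implicit, since Theorem~\ref{th:ZK} is stated without an explicit proof): identify the stationary weights with the polynomials $f_\mu$ at $q=1$ via Lemma~\ref{eq:qKZlemma} and the equivalence of \eqref{eq:qKZ} with \eqref{qKZC1b}, sum over the $W_0$-orbit using Lemma~\ref{le:symmetry} together with the identification $\mathcal{K}_\lambda=K_\lambda$, and evaluate at $x_1=\cdots=x_n=1$. The caveats you flag (fixing the overall scalar, absence of poles at $q=1$, positivity of the weights) are genuine loose ends, but the paper does not address them either, so your argument is faithful to, and no less complete than, the one given there.
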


\begin{cor}
The normalisation of the stationary state of the $2r+1$-species asymmetric exclusion process factorises as a product over the rank $r=1$ standard ASEP 
\be
Z_{\lambda}(t,a,b,c,d) = \prod_i Z_{\lambda'_i}(t,a,b,c,d)
\ee
\end{cor}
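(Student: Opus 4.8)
The plan is to obtain the Corollary by combining Theorem~\ref{th:ZK} with a factorisation of symmetric Koornwinder polynomials at $q=1$. By Theorem~\ref{th:ZK} applied to the $(2r+1)$-species process one has $Z_\lambda(t,a,b,c,d)=K_\lambda(1^n;q=1,t;a,b,c,d)$, where $\lambda$ is the partition encoding the conserved sector, so the Corollary becomes a statement about the principal specialisation of a Koornwinder polynomial at $q=1$. For orientation, for each threshold $m\in\{1,\ldots,r\}$ the map sending all species $\ge m$ to $+1$, all species $\le -m$ to $-1$ and all remaining species to $0$ is a Markov projection of the process onto the rank $1$ ASEP with the same rates, in the sector with $\lambda'_m=\#\{j:\lambda_j\ge m\}$ particles; this isolates which rank $1$ normalisations $Z_{\lambda'_1},\ldots,Z_{\lambda'_r}$ should occur, but not that they multiply.

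The key input is the polynomial identity
\be
K_\lambda(x_1,\ldots,x_n;q=1,t;a,b,c,d)=\prod_{i}K_{(1^{\lambda'_i}0^{n-\lambda'_i})}(x_1,\ldots,x_n;q=1,t;a,b,c,d),
\ee
the product running over the columns of $\lambda$, each factor being the Koornwinder polynomial of the single-column rank $1$ composition $(1^{\lambda'_i}0^{n-\lambda'_i})$. This is the $BC_n$ counterpart of the classical degeneration $P_\lambda(x;q=1,t)=\prod_i e_{\lambda'_i}(x)=\prod_i P_{(1^{\lambda'_i}0^{n-\lambda'_i})}(x;q,t)$ of type $A$ Macdonald polynomials. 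Granting it, specialising $x_1=\cdots=x_n=1$ and invoking Theorem~\ref{th:ZK} once more, now with $r=1$, identifies $K_{(1^{\lambda'_i}0^{n-\lambda'_i})}(1^n;q=1,t;a,b,c,d)=Z_{\lambda'_i}(t,a,b,c,d)$, so that $Z_\lambda=\prod_iZ_{\lambda'_i}$.

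The main obstacle is proving this $q=1$ factorisation, which is genuinely a $q=1$ phenomenon (it fails for generic $q$, just as $P_\lambda(x;q,t)\neq\prod_ie_{\lambda'_i}(x)$ in general). The right-hand side $\Pi(x)$ is $W_0$-invariant, being a product of $W_0$-invariants, and its top-degree part is $\prod_im_{(1^{\lambda'_i})}=\prod_ie_{\lambda'_i}=m_\lambda+(\text{dominance-lower terms})$, which coincides with that of $K_\lambda$; hence $\Pi=K_\lambda+\sum_{\mu<\lambda}c_\mu(q,t)K_\mu$ and it remains to show $c_\mu(1,t)=0$ for $\mu<\lambda$. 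The cleanest route I see is to act on $\Pi$ with the commuting family of Koornwinder $q$-difference operators \cite{Diejen}: after the standard normalisation (subtracting the scalar by which each acts on constants, and dividing by $q-1$), their $q=1$ limits become first-order operators with no zeroth-order part, hence derivations, and therefore respect the product $\Pi=\prod_iK_{(1^{\lambda'_i}0^{n-\lambda'_i})}$; thus $\Pi$ is at $q=1$ a joint eigenfunction with eigenvalue equal to $\sum_i$ of the eigenvalues of the single-column factors, and checking that this sum equals the corresponding eigenvalue of $K_\lambda$ forces $\Pi(x;1,t)=K_\lambda(x;1,t)$. Equivalently one can argue on the non-symmetric side, where at $q=1$ the spectral vector $\langle\lambda\rangle_i=\rho_i(\lambda)+u\lambda_i+v\,\epsilon_i(\lambda)$ collapses ($u=0$) to a quantity depending on $\lambda$ only through the order-and-sign pattern of its parts, so the Baxterised operators in \eqref{eq:TionE}--\eqref{eq:TnonE} that build the $f_\mu$ (hence $\mathcal{K}_\lambda=K_\lambda$) out of $E_\delta$ register only the column data of $\lambda$, and $E_\delta$, every $f_\mu$, and $\mathcal{K}_\lambda$ factorise columnwise at $q=1$. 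Either way the delicate points are controlling the $q\to1$ specialisation so that no spurious poles appear (in the $E_\mu$ and in the change of basis of Proposition~\ref{prop:changeofbasis}) and the identity is genuinely one of polynomials, disentangling the degeneracies of the $q=1$ eigenvalues, and, on the non-symmetric side, verifying the combinatorics by which the $W_0$-symmetrisation over $W_0\cdot\lambda$ distributes over the columnwise product of single-column symmetrisations.
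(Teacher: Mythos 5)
Your proposal takes essentially the same route as the paper: both reduce the Corollary to the $q=1$ factorisation $K_\lambda(x;1,t;a,b,c,d)=\prod_i K_{1^{\lambda'_i}}(x;1,t;a,b,c,d)$ and then conclude immediately from Theorem~\ref{th:ZK}. The only difference is that the paper simply cites this factorisation as a known property of Koornwinder polynomials (a private communication of Rains and Warnaar) rather than proving it, whereas you sketch a (plausible but admittedly incomplete) argument for it via the $q\to 1$ degeneration of the Koornwinder difference operators.
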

\begin{proof}
It is a property \cite{RainsW} of Koornwinder polynomials that at $q=1$ we have
\be
K_{\lambda}(x_1,\ldots,x_n;1,t;a,b,c,d) = \prod_i K_{1^{\lambda'_i}}(x_1,\ldots,x_n;1,t;a,b,c,d),
\ee
where $1^k$ denotes a column of length $k$. The Corollary then follows immediately from Theorem~\ref{th:ZK}.
\end{proof}
\bigskip

\section{Generalised boundary conditions}
\label{se:genbc}

So far we have treated the label $0$ as special, as it cannot be
created nor annihilated at the boundaries. It is possible within our
setup to take similar boundary conditions for labels
$\{0,\ldots,r_L\}$ and take nonzero boundary rates for the following
events at the left hand side, 
\be
\begin{split}
(-m,\ldots) &\mapsto (+m,\ldots)\quad \text{with rate } \alpha,\\
(+m,\ldots) &\mapsto (-m,\ldots)\quad \text{with rate } \gamma,\label{eq:BCL-new}
\end{split}
\ee
for $m\in\{r_L+1,\ldots,r\}$ and likewise at the right hand boundary
\be
\begin{split}
(\ldots,+m) &\mapsto (\ldots,-m)\quad \text{with rate } \beta,\\
(\ldots,-m) &\mapsto (\ldots,+m)\quad \text{with rate } \delta.\label{eq:BCR-new}
\end{split}
\ee
for $m\in\{r_R+1,\ldots,r\}$.

Clearly the case dealt in the previous sections corresponds to
$r_L=r_R=0$. Notice that all the particles of species with label 
$|\mu|\leq \min(r_L,r_R)$ do not get flipped at either boundaries so
their number is conserved. Particles with label
$\min(r_L,r_R)<|\mu|\leq \max(r_L,r_R)$ can be flipped only at the
boundary corresponding to the minimum. These boundary
conditions are a sub famility of the boundary conditions considered by
Crampe et al. \cite{CrampeFRV}, moreover the case $r_L=r_R=r=1$ has been considered by
\cite{Schutz2015,Kuan2015}, while the case  $r_L=r_R=r$ has been
considered by \cite{Mandelshtam2015}. The corresponding
boundary matrix $K_0(x)$ is given by 
\begin{align}
K^{(2r+1,r_L)}_0 (x)
=
\sum_{i=-r}^{r}
E^{(ii)}
+
\frac{q-x^2}{h_0(a,c,x)}
\Big(
\sum_{r_L < i \leq r}
t_0 E^{(-i,-i)}
+
E^{(r+1-i,r+1-i)}
\nonumber\\
{}
-
\sum_{r_L < i \leq r}
q^{-i}
E^{(-i,r+1-i)}
+q^{r+1-i} t_0
E^{(r+1-i,-i)}
\Big),
\end{align}
and $K_n(x)$ is defined as
\begin{align}
K^{(2r+1,r_R)}_n (x)
=
\sum_{i=-r}^{r}
E^{(ii)}
-
\frac{1-x^2}{h_n(b,d,x)}
\Big(
\sum_{r_R < i \leq r}
E^{(-i,-i)}
+t_n
E^{(r+1-i,r+1-i)}
\nonumber\\
{}
-
\sum_{r_R < i \leq r}
E^{(r+1-i,-i)}
+t_n
E^{(-i,r+1-i)}
\Big).
\end{align}
Let us assume without loosing generality that $r_R\leq r_L$ (the
opposite case can be treated analogously). The  action of the Weyl group
$W_0$ on $\mathbb Z^n$ defined in Section \ref{subsc:weyl} can be
deformed by 
\be\begin{split}
s_i (\lambda_1,\ldots,\lambda_n) &=
(\lambda_1,\ldots\lambda_{i+1},\lambda_i,\ldots,\lambda_n),\\
s_n (\lambda_1,\ldots,\lambda_n) &= \left\{
\begin{array}{cc}
(\lambda_1,\ldots,-\lambda_n) & |\lambda_n|>r_R \\
(\lambda_1,\ldots,\lambda_n) & |\lambda_n|\leq r_R.
\end{array} \right.
\end{split}\ee
This action splits $\mathbb{Z}^n$ into sectors, that are labeled by
\emph{generalised} dominant 
weights $\bar \delta^+$, which are weakly decreasing compositions in
$\mathbb{Z}^n$, such that their entries are larger or equal to
$-r_R$. Let $\bar \delta$ the antidominant weight of $\bar
\delta^+$, i.e. $\bar \delta$ antidominant and  $\bar \delta\in
W_0(\bar \delta^+)$. 
For any composition $\mu \in \mathbb Z^n$ call $w_\mu$ the shortest
signed permutation that puts $\mu$ in antidominant form. By abuse of
notation we call $\ell(\mu)=\ell(w_\mu)$, the length of $w_\mu$, and 
$m(\mu)=\#\{(w_\mu)_i<0\}$, i.e. the number of minus signs in $w_\mu$.
Then to a composition $\mu$ we associate two compostions $\mu^c$
and $\mu^\pi$. The first one, $\mu^{c}$, is obtained from $\mu$ by removing all its entries whose
modulus is larger than $r_L$. The second one, namely $\mu^\pi$,
is defined by 
$$
\mu^\pi_i= \left\{
\begin{array}{cc}
0 & |\mu_i| \leq r_L\\
\mu_i & |\mu_i| > r_L
\end{array}
\right.
$$
Notice in particular that if $\bar \lambda$ is a generalised
dominant weight, then   $\bar \lambda^\pi$ is a dominant weight whose
parts are either zero or larger than $r_R$. 
Given a generalised dominant weight $\bar \lambda$, denote by
\be
\ket{\Psi^{\bar \lambda}(x_1,\ldots,x_n)}  = \sum_{\mu\in W_0 \bar
  \lambda} f^{\bar \lambda}_{\mu_1,\ldots,\mu_n}(x_1,\ldots,x_n) \ket{\mu},
\ee
a solution of the equations
\begin{align}
\check{R}_i(x_{i+1}/x_i)\  \ket{\Psi^{\bar \lambda}(x_1,\ldots,x_n)} & =s_i\ket{\Psi^{\bar \lambda}(x_1,\ldots,x_n)}, \nonumber\\
K_0(x_1)\ \ket{\Psi^{\bar \lambda}(x_1,\ldots,x_n)} & =s_0 \ket{\Psi^{\bar \lambda}(x_1,\ldots,x_n)} ,
\label{eq:qKZ-gen} \\
K_n(x_n)\ \ket{\Psi^{\bar \lambda}(x_1,\ldots,x_n)} & =s_n \ket{\Psi^{\bar \lambda}(x_1,\ldots,x_n)}. \nonumber
\end{align}
which in terms of the components $f^{\bar \lambda}_{\mu_1,\ldots,\mu_n}$ read 
\begin{align}
T_0 f^{\bar \lambda}_{\mu_1,\ldots} &= q^{\mu_1}
f^{\bar \lambda}_{-\mu_1,\ldots} \qquad \mu_1 <
-r_L,\nonumber
\\
T_0 f^{\bar \lambda}_{\mu_1,\ldots} &= t_0
f^{\bar \lambda}_{\mu_1,\ldots} \qquad |\lambda_1| \leq  r_L,\nonumber
\\
T_i f^{\bar \lambda}_{\ldots,\mu_i,\mu_{i+1},\ldots} &= t f^{\bar \lambda}_{\ldots,\mu_i,\mu_{i+1},\ldots}\qquad \mu_i = \mu_{i+1},\nonumber
\\
T_i f^{\bar \lambda}_{\ldots,\mu_i,\mu_{i+1},\ldots} &=  f^{\bar \lambda}_{\ldots,\mu_{i+1},\mu_i,\ldots}\qquad \mu_i > \mu_{i+1},\label{qKZC1b-s}\\
T_n f^{\bar \lambda}_{\ldots,\mu_n} &= t_n f^{\bar \lambda}_{\ldots,\mu_n}\qquad
|\mu_n|\leq r_R,\nonumber
\\
T_n f^{\bar \lambda}_{\ldots,\mu_{n}} &= f^{\bar \lambda}_{\ldots,-\mu_n}\qquad \mu_n > r_R.\nonumber
\end{align}
If the generalised dominant weight $\bar \lambda$ has parts strictly
larger than $r_L$ or zero, i.e. if $\bar \lambda = \bar \lambda^\pi$,
then these equations coincide with 
(\ref{qKZC1b}), therefore their solution is given by
(\ref{qKZ1}). The other extreme case is when all the parts of
$\bar \lambda$ are  in modulus smaller or equal to $r_L$, i.e. if
$\bar \lambda= \bar \lambda^c$, in this case it is easy to verify that
the solution of (\ref{qKZC1b-s})  is given by
\be
f^{\bar \lambda}_{\mu}(x_1,\dots,x_n)= t^{-\ell(\mu)}  (tt_n^{-1})^{m(\mu)},
\ee 
and since this does not depend on the spectral parameter we shall write
it in the following as $\ket{\Psi^{\bar  \lambda}}$. 
For the general case the solution has a nested form given by the following
\begin{prop}
Let $\bar \lambda$ be a generalised dominant weight, then the
solution of (\ref{qKZC1b-s}) is given by
\be
\ket{\Psi^{\bar \lambda}(x_1,\ldots,x_n)}  = \ket{\Psi^{\bar
    \lambda^\pi}(x_1,\ldots,x_n)} \otimes \ket{\Psi^{\bar
    \lambda^{c}}}
\ee
where we have used the tensor notation $\ket{\mu}=\ket{\mu^\pi}\otimes
\ket{\mu^c} $.
\end{prop}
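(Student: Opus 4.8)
The plan is to use uniqueness: the consistency of the exchange relations \eqref{qKZC1b-s} rests on the braid relations \eqref{eq:braid0}, \eqref{eq:braidi}, \eqref{eq:braidn} for $T_0,T_1,\dots,T_n$, exactly as in the constructions of Kasatani and Sahi recalled above, so within the sector $W_0\bar\lambda$ the solution is determined, up to an overall scalar, by the recursion building every component $f^{\bar\lambda}_\mu$ from the antidominant one $f^{\bar\lambda}_{\bar\delta}$ via the invertible operators $T_i^{-1}$ and $T_n^{-1}$. It therefore suffices to check that the vector with components $f^{\bar\lambda}_\mu := f^{\bar\lambda^\pi}_{\mu^\pi}\,f^{\bar\lambda^c}_{\mu^c}$ is a nonzero solution of \eqref{qKZC1b-s} (equivalently of \eqref{eq:qKZ-gen}) with the correct antidominant component. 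Two observations make this transparent. First, $\ket{\Psi^{\bar\lambda^c}}$ is the \emph{constant} vector $f^{\bar\lambda^c}_\nu = t^{-\ell(\nu)}(tt_n^{-1})^{m(\nu)}$, which carries no spectral-parameter dependence; hence each of $T_0,T_1,\dots,T_n$ acts on $f^{\bar\lambda}$ only through the factor $f^{\bar\lambda^\pi}$, while on the constants $f^{\bar\lambda^c}$ those generators act simply by multiplication by $t_0,t,\dots,t,t_n$. Second, since $\bar\lambda^\pi$ is dominant with every part equal to $0$ or strictly larger than $r_L$, the vector $\ket{\Psi^{\bar\lambda^\pi}}$ is precisely the solution of the ordinary relations \eqref{qKZC1b} given by \eqref{qKZ1}, as noted just before the proposition.

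One then runs through the cases of \eqref{qKZC1b-s}, classifying each site as a $\pi$-slot ($|\mu_i|>r_L$, so $\mu^\pi_i=\mu_i$) or a $c$-slot ($|\mu_i|\le r_L$, so $\mu^\pi_i=0$), and using that a signed permutation of $\mu$ preserves this classification, that two globally adjacent sites lying in a common block are adjacent within that block, and that $r_R\le r_L$, so any site with $|\mu_i|\le r_R$ is a $c$-slot. For $T_0$: if site $1$ is a $c$-slot then $\mu^\pi_1=0$ and \eqref{qKZC1b} yields $T_0 f^{\bar\lambda^\pi}_{\mu^\pi}=t_0 f^{\bar\lambda^\pi}_{\mu^\pi}$, giving the $|\mu_1|\le r_L$ line; if $\mu_1<-r_L$ then $\mu^\pi_1<0$, $(s_0\mu)^\pi=s_0(\mu^\pi)$ and $(s_0\mu)^c=\mu^c$, and the $\lambda_1<0$ line of \eqref{qKZC1b} gives the $\mu_1<-r_L$ line. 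For $T_i$ ($1\le i\le n-1$): when $\mu_i=\mu_{i+1}$ one has $T_i f^{\bar\lambda^\pi}_{\mu^\pi}=t f^{\bar\lambda^\pi}_{\mu^\pi}$ regardless of the block, so $T_i f^{\bar\lambda}_\mu=t f^{\bar\lambda}_\mu$; when $\mu_i>\mu_{i+1}$ with both sites in the $\pi$-block, or with the pair mixed (which forces the large member to be positive if it sits at $i$ and negative if at $i+1$, so that $\mu^\pi_i>\mu^\pi_{i+1}$ in either case), then $(s_i\mu)^\pi=s_i(\mu^\pi)$, $(s_i\mu)^c=\mu^c$ and \eqref{qKZC1b} closes it; when $\mu_i>\mu_{i+1}$ with both sites in the $c$-block, then $(s_i\mu)^\pi=\mu^\pi$, $(s_i\mu)^c=s_j(\mu^c)$ with $(\mu^c)_j>(\mu^c)_{j+1}$, and the identity $t\,f^{\bar\lambda^c}_{\mu^c}=f^{\bar\lambda^c}_{s_j\mu^c}$ is the corresponding instance of \eqref{qKZC1b-s} for the pure-$c$ system. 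For $T_n$: if $|\mu_n|\le r_R$ then $\mu^\pi_n=0$ and $T_n f^{\bar\lambda^\pi}_{\mu^\pi}=t_n f^{\bar\lambda^\pi}_{\mu^\pi}$; if $r_R<\mu_n\le r_L$ the sign flip sits in the $c$-block with $\mu^\pi$ untouched and one uses $t_n\,f^{\bar\lambda^c}_{\mu^c}=f^{\bar\lambda^c}_{s_n\mu^c}$; if $\mu_n>r_L$ the flip sits in the $\pi$-block and \eqref{qKZC1b} applies directly. Nonvanishing and the normalisation follow at $\mu=\bar\delta$, where $f^{\bar\lambda}_{\bar\delta}=f^{\bar\lambda^\pi}_{\bar\delta^\pi}\cdot 1$.

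The bookkeeping above is mechanical, and the two substantive points are already isolated: that $\ket{\Psi^{\bar\lambda^c}}$ is spectrally inert, so the generators factor through the $\pi$-component without interference, and the mixed $T_i$ cases, where one must verify that a $0$ adjacent to a large entry inside $\mu^\pi$ is transposed in the length-changing direction prescribed by \eqref{qKZC1b}---i.e.\ that the $\pi$-block genuinely sees a large--small swap as a pure permutation. I expect the main obstacle to be phrasing cleanly the bijection $\mu\leftrightarrow(\mu^\pi,\mu^c)$ and its $W_0$-equivariance, so that the product formula $f^{\bar\lambda}=f^{\bar\lambda^\pi}f^{\bar\lambda^c}$ is unambiguous; with that fixed the case analysis is routine.
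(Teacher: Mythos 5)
Your proposal is correct and follows essentially the same route as the paper: the paper's proof is precisely a direct case-by-case verification of \eqref{qKZC1b-s} for the product ansatz $f^{\bar\lambda}_{\mu}=t^{-\ell(\mu^{c})}(tt_n^{-1})^{m(\mu^{c})}f^{\bar\lambda^\pi}_{\mu^\pi}$, split into the same left-boundary, bulk and right-boundary cases according to whether the relevant sites lie in the $\pi$-block or the $c$-block, with the length/sign counts $\ell(\mu^c)$, $m(\mu^c)$ absorbing the eigenvalues $t$ and $t_n$ in the pure-$c$ cases. The uniqueness preamble you add is harmless but not part of the paper's argument, which simply exhibits the explicit check.
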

In terms of the components, the previous Propositions tells us that 
\be
f^{\bar \lambda}_{\mu}(x_1,\dots,x_n) = t^{-\ell(\mu^{c})}  (tt_n^{-1})^{m(\mu^{c})}
f^{\bar \lambda^\pi}_{\mu^\pi}(x_1,\dots,x_n) \label{eq:f^{(s_L,s_R)}}.
\ee
\begin{proof}
The proof consist in an explicit check of eqs.(\ref{qKZC1b-s}). 
\begin{itemize}
\item Right boundary:
\begin{itemize}
\item If $\mu_n>r_L$ then $\pi\mu_n=\mu_n$ and $T_nf^{\tilde \pi
  \bar \lambda}_{\dots,\mu_n}=f^{\tilde \pi \bar
  \lambda}_{\dots,-\mu_n}$. 
\item If $r_R< \mu_n \leq r_L$ then $\pi\mu_n=0$, therefore $T_nf^{\tilde \pi
  \bar \lambda}_{\dots,0}=t_nf^{\tilde \pi \bar
  \lambda}_{\dots,0}$. On the other hand $\ell(\mu^{(c)})$ and
$m(\mu^{(c)})$ decrease by one.
\item If  $|\mu_n|\leq r_R$ then $\pi\mu_n=0$ and the equation again
follows from $T_nf^{\tilde \pi
  \bar \lambda}_{\dots,0}=t_nf^{\tilde \pi \bar
  \lambda}_{\dots,0}$.
\end{itemize}
\item Bulk
\begin{itemize}
\item If $\mu_i=\mu_{i+1}$ then $\tilde \mu_i=\tilde \mu_{i+1}$ and
  the equation follows because it is satisfied by $f^{\tilde \pi
    \bar \lambda}_{\ldots,\tilde \pi \mu_i,\tilde \pi
    \mu_{i+1},\ldots}$.
\item If $\mu_i\neq\mu_{i+1}$ and at least one of the two is in
  modulus larger $r_L$. Then the equation follows from the same
  equation satisfied by  $f^{\tilde \pi
    \bar \lambda}_{\ldots,\tilde \pi \mu_i,\tilde \pi
    \mu_{i+1},\ldots}$.
\item If  $\mu_i>\mu_{i+1}$ and $|\mu_i|,|\mu_{i+1}|\leq r_L$, then
  $\tilde \pi\mu_i= \tilde \pi\mu_{i+1}=0$ and $T_if^{\tilde \pi
    \bar \lambda}_{\ldots,\tilde \pi \mu_i=0,\tilde \pi
    \mu_{i+1}=0,\ldots}=tf^{\tilde \pi
    \bar \lambda}_{\ldots,\tilde \pi \mu_i=0,\tilde \pi
    \mu_{i+1}=0,\ldots}$. On the other hand the exchange of $\mu_i$
  and $\mu_{i+1}$ makes $\ell(\mu^{(c)})$ decrease by $1$. 
\end{itemize}
\item Left boundary
\begin{itemize}
\item If $\mu_1<r_L$ then $\pi \mu_1=m_1$ and $T_0f^{\tilde \pi
  \bar \lambda}_{\mu_1,\dots}=q^{\mu_1}t_0^{-1}f^{\tilde \pi \bar
  \lambda}_{-\mu_1,\dots}$.
\item If  $|\mu_1|\leq r_L$ then $\pi\mu_1=0$ and the equation again
follows from $T_0f^{\tilde \pi
  \bar \lambda}_{0,\dots}=t_0f^{\tilde \pi \bar
  \lambda}_{0,\dots}$.
\end{itemize}
\end{itemize}
\end{proof}

\section*{Acknowledgment}
We gratefully acknowledge support from the Australian Research Council Centre of Excellence for Mathematical and Statistical Frontiers (ACEMS). MW acknowledges support by an Australian Research Council DECRA. We warmly thank the Galileo Galilei Institute and the organisers of the research program \textit{Statistical Mechanics, Integrability and Combinatorics} for kind hospitality during part of this work. JdG would also like to thank the KITP Program \textit{New approaches to non-equilibrium and random systems: KPZ integrability, universality, applications and experiments}, supported in part by the National Science Foundation under Grant No. NSF PHY11-25915. It is a pleasure to thank Eric Rains, Ole Warnaar and Lauren Williams for discussions.
\appendix

\section{Explicit example for $n=2$}
\label{se:n=2}

To show that we can use either the dual K-matrix at the right hand side or the left hand side, we take the Yang--Baxter equation (\ref{YBeq}) and rewrite it as follows
\begin{align*}
&R_{1,2}\left(w/x\right)R_{1,3}\left(w^2\right)R_{2,3}\left(w x\right)=R_{2,3}\left(w x\right)R_{1,3}\left(w^2\right) R_{1,2}\left(w/x\right), \\
&R_{1,3}\left(w/x\right)R_{1,2}\left(w^2\right)R_{3,2}\left(w x\right)=R_{3,2}\left(w x\right)R_{1,2}\left(w^2\right) R_{1,3}\left(w/x\right), \\
&R_{1,3}\left(w/x\right)^{\tau_1}R_{3,2}\left(w x\right)(R_{1,2}\left(w^2\right)^{\tau_1})^{-1}=(R_{1,2}\left(w^2\right)^{\tau_1})^{-1}R_{3,2}\left(w x\right) R_{1,3}\left(w/x\right)^{\tau_1}.
\end{align*}
In the first line we wrote the Yang--Baxter equation, in the second we conjugated with $P_{2,3}$ both sides, then we transposed with $\tau_1$ and multiplied both sides of equation by $(R_{1,2}(w^2)^{\tau_1})^{-1}$. The final step is to transpose again with $\tau_1$ and replace the matrix $((R_{1,2}(w^2)^{\tau_1})^{-1})^{\tau_1}$ with $\widetilde{R}(w^2)$. The resulting equation can be written explicitly in a convenient way using the trace 
\begin{align}\label{YBTr}
\text{Tr}_1 (R_{1,3}\left(w/x\right)R_{3,2}\left(w x\right)\widetilde{R}_{1,2}\left(w^2\right))=\text{Tr}_1(\widetilde{R}_{1,2}\left(w^2\right) R_{3,2}\left(w x\right) R_{1,3}\left(w/x\right)).
\end{align}

Now we take (\ref{Tmat}) and write explicitly $\widetilde{K}_n(w)$ in terms of $\widetilde{R}(w^2)$ according to (\ref{Kdn1}). Using equation (\ref{YBTr}) we can push $\widetilde{R}(w^2)$ to the other boundary where it meets $K_0(w)$ giving rise to $\widetilde{K}_0(w)$ by definition (\ref{Kd01}). Here is the $n=2$ example
\begin{align*}
&\text{Tr}_{0,\bar{0}} \left(R_{0,2}\left(w x_2\right)R_{0,1}\left(w x_1\right)K_0 (w) R_{1,0}\left(w/x_1\right)R_{2,0}\left(w/x_2\right)K_n(1/w)\tilde{R}_{\bar{0},0}\left(w^2\right) P_{0,\bar{0}} \right)=
\\
&\text{Tr}_{0,\bar{0}} \left(R_{0,2}\left(w x_2\right)R_{0,1}\left(w x_1\right)K_0 (w) K_n(1/w)P_{0,\bar{0}}  R_{1,\bar{0}}\left(w/x_1\right)R_{2,\bar{0}}\left(w/x_2\right)\tilde{R}_{0,\bar{0}}\left(w^2\right) \right)=
\\
&\text{Tr}_{0,\bar{0}} \left(R_{2,\bar{0}}\left(w/x_2\right)R_{0,1}\left(w x_1\right)K_0 (w) K_n(1/w)P_{0,\bar{0}}  R_{1,\bar{0}}\left(w/x_1\right)R_{0,2}\left(w x_2\right)\tilde{R}_{0,\bar{0}}\left(w^2\right) \right)=
\\
&\text{Tr}_{0,\bar{0}} \left(R_{0,1}\left(w x_1\right)R_{2,\bar{0}}\left(w/x_2\right)K_0 (w) K_n(1/w)P_{0,\bar{0}}  R_{0,2}\left(w x_2\right)R_{1,\bar{0}}\left(w/x_1\right)\tilde{R}_{0,\bar{0}}\left(w^2\right) \right)=
\\
&\text{Tr}_{0,\bar{0}} \left(R_{1,\bar{0}}\left(w/x_1\right)R_{2,\bar{0}}\left(w/x_2\right)K_0 (w) K_n(1/w)P_{0,\bar{0}}  R_{0,2}\left(w x_2\right)R_{0,1}\left(w x_1\right)\tilde{R}_{0,\bar{0}}\left(w^2\right) \right)=
\\
&\text{Tr}_{0,\bar{0}} \left(R_{1,\bar{0}}\left(w/x_1\right)R_{2,\bar{0}}\left(w/x_2\right) K_n(1/w) R_{\bar{0},2}\left(w x_2\right)R_{\bar{0},1}\left(w x_1\right)K_0 (w) \tilde{R}_{\bar{0},0}\left(w^2\right)P_{0,\bar{0}} \right).
\end{align*}
In the first line we wrote the transfer matrix explicitly with $\tilde{K}$ as in (\ref{Kdn1}). The matrix $K_0$ acts in $V_0$ and $K_n$ in $V_{\bar{0}}$. In the second line we commuted $K_n$ and $P$ to the left. In the following three lines we used (\ref{YBTr}), rearranged R-matrices and used (\ref{YBTr}) again. This action switched the order of $M^{(1)}$ and $M^{(2)}$. In the last line we commuted $K_0$ and $P$ to the right. The last three matrices in the last line are precisely of the form (\ref{Kd01}). 

Therefore, we get
\begin{align}
&\overline{M}(w;x_1,..,x_n)=M^{(2)}(w;x_1,..,x_n) K_n(1/w) M^{(1)}(w;x_1,..,x_n), \nonumber \\
&T(w;x_1,..,x_n)=\text{Tr}_0 ( \overline{M}(w;x_1,..,x_n) \widetilde{K}_0 (w) ). 
\end{align}

\section{Commutativity of the transfer matrices}
\label{tmcomm}
Let us show that the statement 
\begin{align*}
[T(u),T(w)]=0,
\end{align*}
appearing in Theorem \ref{th:commut} holds. We need to supply the matrices entering $T(u)$ and $T(w)$, given by (\ref{Tmat}), with indices denoting the auxiliary spaces. First, write $K$ for $K_0$ and $\tilde{K}$ for $\tilde{K}_n$. Then we write the T-matrix as
\begin{align*}
T(u)=\text{Tr}_{0}\left( M_0^{(1)}(u ) K_0(u)   M_0^{(2)}\left(u\right) \tilde{K}_0(u) \right),
\end{align*}
where the subscript $0$ denotes the auxiliary space $V_0$. Take the product $T(u)T(w)$
\begin{align*}
&T(u)T(w)=\text{Tr}_{0}\left( M_0^{(1)}(u ) K_0(u)   M_0^{(2)}\left(u\right) \tilde{K}_0(u)\right) 
\text{Tr}_{\bar{0}}\left( M_{\bar{0}}^{(1)}(w) K_{\bar{0}}(w)   M_{\bar{0}}^{(2)}\left(w\right) \tilde{K}_{\bar{0}}(w)\right)= \\
&\text{Tr}_{0,\bar{0}}\left( M_0^{(1)}(u ) K_0(u)   M_0^{(2)}\left(u\right) \tilde{K}_0(u)
M_{\bar{0}}^{(1)}(w ) K_{\bar{0}}(w)   M_{\bar{0}}^{(2)}\left(w\right) \tilde{K}_{\bar{0}}(w)  \right),
\end{align*}
apply the transposition in $V_0$
\begin{align*}
\text{Tr}_{0,\bar{0}}\left(K_0(u)^{\tau _0} M^{(1)}_0(u){}^{\tau _0} \tilde{K}_0(u){}^{\tau _0}
M_0^{(2)}\left(u\right){}^{\tau _0} M^{(1)}_{\bar{0}}(w )   K_{\bar{0}}(w)   M_{\bar{0}}^{(2)}\left(w\right)  
\tilde{K}_{\bar{0}}(w) \right),
\end{align*}
the two monodromy matrices $M_0^{(2)}\left(u\right){}^{\tau _0} M^{(1)}_{\bar{0}}(w ) $ can be switched using 
\begin{align*}
R_{\bar{0},0}^{\tau_0}(w u) M_0^{(2)}\left(u\right){}^{\tau _0} M^{(1)}_{\bar{0}}(w)-
M^{(1)}_{\bar{0}}(w)  M_0^{(2)}\left(u\right){}^{\tau _0} R_{\bar{0},0}^{\tau_0}(w u).
\end{align*}
The R-matrix in this expression can be introduced using the crossing unitarity written in the form 
\begin{align*}
\tilde{R}_{\bar{0},0}^{\tau_0}(w u)R_{\bar{0},0}^{\tau_0}(w u)=\mathbb{I}. 
\end{align*}
We find
\begin{align*}
\text{Tr}_{0,\bar{0}}\left(K_0(u)^{\tau _0} M^{(1)}_0(u){}^{\tau _0} \tilde{K}_0(u){}^{\tau _0}
\tilde{R}_{\bar{0},0}^{\tau_0}(w u)
M^{(1)}_{\bar{0}}(w ) M_0^{(2)}\left(u\right){}^{\tau _0}   R_{\bar{0},0}^{\tau_0}(w u)K_{\bar{0}}(w)   M_{\bar{0}}^{(2)}\left(w\right)  
\tilde{K}_{\bar{0}}(w) \right).
\end{align*}
The next step is to transpose in $V_0$
\begin{align*}
\text{Tr}_{0,\bar{0}}\left(\tilde{R}_{\bar{0},0}(w u)  \tilde{K}_0(u)  M^{(1)}_0(u)  K_0(u)
M^{(1)}_{\bar{0}}(w ) R_{\bar{0},0}(w u) K_{\bar{0}}(w)  M_0^{(2)}\left(u\right)  M_{\bar{0}}^{(2)}\left(w\right)  
 \tilde{K}_{\bar{0}}(w) \right).
\end{align*}
The matrices $M_0^{(2)}\left(u\right)  M_{\bar{0}}^{(2)}\left(w\right)$ can be interchanged using
\begin{align*}
R_{\bar{0},0}(u/w)M_0^{(2)}\left(u\right)  M_{\bar{0}}^{(2)}\left(w\right)  -
M_{\bar{0}}^{(2)}\left(w\right)  M_0^{(2)}\left(u\right) R_{\bar{0},0}(u/w).
\end{align*}
The R-matrix in this equation can be introduced using the unitarity $\mathbb{I}=R_{0,\bar{0}}(w/u)R_{\bar{0},0}(u/w)$, hence we get
\begin{align*}
\text{Tr}_{0,\bar{0}}\bigg{(}&\tilde{R}_{\bar{0},0}(w u)  \tilde{K}_0(u)  M^{(1)}_0(u)  K_0(u)
M^{(1)}_{\bar{0}}(w ) R_{\bar{0},0}(w u) K_{\bar{0}}(w)  R_{0,\bar{0}}(w/u) \times \\
&M_{\bar{0}}^{(2)}\left(w\right)    M_0^{(2)}\left(u\right)  R_{\bar{0},0}(u/w)  \tilde{K}_{\bar{0}}(w) \bigg{)}.
\end{align*}
The cyclicity of trace allows us to take the first two matrices $\tilde{R}_{\bar{0},0}(w u)  \tilde{K}_0(u)$ to the  right side of the product
\begin{align*}
\text{Tr}_{0,\bar{0}}\bigg{(}
&M^{(1)}_0(u)  M^{(1)}_{\bar{0}}(w ) K_0(u)  R_{\bar{0},0}(w u) K_{\bar{0}}(w)  R_{0,\bar{0}}(w/u) \times\\
&M_{\bar{0}}^{(2)}\left(w\right)    M_0^{(2)}\left(u\right)   R_{\bar{0},0}(u/w)
 \tilde{K}_{\bar{0}}(w) \tilde{R}_{\bar{0},0}(w u)  \tilde{K}_0(u)\bigg{)}.
\end{align*}
The order of the K-matrices and R-matrices in the last expression can be switched by means of the reflection equations 
(\ref{ReflK0}) and (\ref{ReflDKn}) 
\begin{align*}
\text{Tr}_{0,\bar{0}}\bigg{(}
&M^{(1)}_0(u)  M^{(1)}_{\bar{0}}(w ) 
R_{\bar{0},0}(w/u)    K_{\bar{0}}(w)  R_{0,\bar{0}}(w u) K_0(u)\times\\
&M_{\bar{0}}^{(2)}\left(w\right)    M_0^{(2)}\left(u\right)   
 \tilde{K}_0(u) \tilde{R}_{0,\bar{0}}(u w)  \tilde{K}_{\bar{0}}(w) R_{0,\bar{0}}(u/w) \bigg{)}.
\end{align*}
Now we need to commute $M^{(1)}_0(u)$ to the right. In order to do that we introduce the R-matrices using the unitarity relation $\mathbb{I}=R_{\bar{0},0}(w/u)R_{0\bar{0}}(u/w)$ and use 
\begin{align*}
R_{0,\bar{0}}(u/w)M_0^{(1)}\left(u\right)  M_{\bar{0}}^{(2)}\left(w\right)  -
M_{\bar{0}}^{(2)}\left(w\right)  M_0^{(2)}\left(u\right) R_{0,\bar{0}}(u/w).
\end{align*}
The result reads 
\begin{align*}
\text{Tr}_{0,\bar{0}}\bigg{(}M^{(1)}_{\bar{0}}(w ) M^{(1)}_0(u) 
 K_{\bar{0}}(w)  R_{0,\bar{0}}(w u) K_0(u)
 M_{\bar{0}}^{(2)}\left(w\right)    M_0^{(2)}\left(u\right)   
 \tilde{K}_0(u) \tilde{R}_{0,\bar{0}}(u w)  \tilde{K}_{\bar{0}}(w)  \bigg{)}.
\end{align*}
In the last step we encountered combinations $R_{0\bar{0}}(u/w)
R_{\bar{0},0}(w/u)$ and set them to $\mathbb{I}$ by unitarity. Transposing again in $V_0$ brings $M_0^{(1)}(u)$ next to 
$M^{(2)}_{\bar{0}}(w)$
\begin{align*}
\text{Tr}_{0,\bar{0}}\bigg{(}  &\tilde{K}_{\bar{0}}(w)  M^{(1)}_{\bar{0}}(w )  K_{\bar{0}}(w)  K_0(u)^{\tau_0} \times \\
&R_{0,\bar{0}}(w u) ^{\tau_0}  M^{(1)}_0(u)^{\tau_0}  M_{\bar{0}}^{(2)}\left(w\right) \tilde{R}_{0,\bar{0}}(u w)^{\tau_0}
 \tilde{K}_0(u)^{\tau_0} M_{0}^{(2)}\left(u\right)^{\tau_0} \bigg{)}.
\end{align*}
It remains to commute $M^{(1)}$ and $M^{(2)}$ by means of $R_{0,\bar{0}}(w u) ^{\tau_0}$, use the crossing unitarity relation
\begin{align*}
\text{Tr}_{0,\bar{0}}\bigg{(} \tilde{K}_{\bar{0}}(w)  M^{(1)}_{\bar{0}}(w )  K_{\bar{0}}(w)  K_0(u)^{\tau_0}
M_{\bar{0}}^{(2)}\left(w\right)  M^{(1)}_0(u)^{\tau_0}   \tilde{K}_0(u)^{\tau_0} M_{0}^{(2)}\left(u\right)^{\tau_0} \bigg{)},
\end{align*}
and transpose in $V_0$
\begin{align*}
\text{Tr}_{0,\bar{0}}\bigg{(} \tilde{K}_{\bar{0}}(w)  M^{(1)}_{\bar{0}}(w )  K_{\bar{0}}(w) M_{\bar{0}}^{(2)}\left(w\right)  
\tilde{K}_0(u) M^{(1)}_0(u) K_0(u)  M_{0}^{(2)}\left(u\right)  \bigg{)}.
\end{align*}
The order of the T-matrices is now switched, so we arrive at $T(w)T(u)$.

\section*{References}


\begin{thebibliography}{99}


\bibitem{ASEP1}
F. Spitzer, \textit{Interaction of Markov processes}, Adv. Math. \textbf{5}  (1970), 246.

\bibitem{ASEP2}
T. Liggett, {\it Interacting Particle Systems}, Springer, New York 1985.

\bibitem{Derrida98} 
B.~Derrida, \textit{An exactly soluble non-equilibrium system: The asymmetric simple exclusion process}, Phys. Rep. {\bf 301} (1998), 65.

\bibitem{Schuetz00} 
G.M.~Sch\"utz, \textit{Exactly solvable models for
    many-body systems far from equilibrium} in {\it Phase Transitions and Critical Phenomena} {\bf 19} (Academic Press, London, 2000). 

\bibitem{GolinelliMallick06}
O. Golinelli and K. Mallick, \textit{The asymmetric simple exclusion process: an integrable model for non-equilibrium statistical mechanics}, J. Phys. {\bf A39}  (2006), 12679.

\bibitem{BEreview}
R.A. Blythe and M.R. Evans, \textit{Nonequilibrium Steady States of Matrix Product Form: A Solver's Guide}, J. Phys. {\bf A40} (2007), R333.

\bibitem{Mallicklectures} K.~Mallick, \textit{The exclusion process: A paradigm for non-equilibrium behaviour}, Physica A \textbf{418} (2015), 1--188.

\bibitem{DEHP}
B. Derrida, M. Evans, V. Hakim and V. Pasquier, \textit{Exact solution of a 1d asymmetric exclusion model using a matrix formulation} J. Phys. {\bf A26} (1993), 1493.
%
\bibitem{gunter}
G. Sch\"utz and E. Domany, \textit{Phase Transitions in an Exactly Soluble One-Dimensional Exclusion Process}, J. Stat. Phys. {\bf 72} (1993), 277.
%
\bibitem{sandow}
S. Sandow, \textit{Partially asymmetric exclusion process with open boundaries}, Phys. Rev. E{\bf 50} (1994), 2660.
%
\bibitem{EsslerR95} 
F.H.L. Essler and V. Rittenberg, \textit{Representations of the quadratic algebra and partially asymmetric diffusion with open boundaries}, J. Phys. {\bf A29} (1996), 3375.
%
\bibitem{PASEPstat1}
T. Sasamoto, \textit{Density profile of the one-dimensional partially asymmetric simple exclusion process with open boundaries}, J. Phys. Soc. Jpn {\bf  69} (2000), 1055.

\bibitem{PASEPstat2}
R.A. Blythe, M.R. Evans, F. Colaiori and F.H.L. Essler, \textit{Exact solution of a partially asymmetric exclusion model using a deformed oscillator algebra} J. Phys. {\bf
A33} (2000), 2313.

\bibitem{GierE1} J. de Gier and F.H.L. Essler, \textit{Bethe Ansatz solution of the asymmetric exclusion process with open boundaries}, Phys. Rev. Lett. \textbf{95} (2005), 240601; arXiv:cond-mat/0508707.

\bibitem{GierE2} J. de Gier and F.H.L. Essler, \textit{Slowest relaxation mode of the partially asymmetric exclusion process with open boundaries}, J. Phys. A \textbf{41} (2008), 485002; arXiv:0806.3493.

\bibitem{UchiSW} M. Uchiyama, T. Sasamoto and M. Wadati, \textit{Asymmetric simple exclusion process with open boundaries and Ashkey-Wilson polynomials}, J. Phys. A: Math. Gen. \textbf{37} (2004), 4985.

\bibitem{CorteelW11} S. Corteel and L. Williams, \textit{Tableaux combinatorics for the asymmetric exclusion process and Askey-Wilson polynomials}, Duke Math J. \textbf{159} (2011), 385--415.

\bibitem{AyyerLS} A. Ayyer, J.L. Lebowitz and E.R. Speer, \textit{On the two species asymmetric exclusion process with semi-permeable boundaries}, J. Stat. Phys. \textbf{135} (2009), 1009.

\bibitem{AyyerLS2} A. Ayyer, J.L. Lebowitz and E.R. Speer, \textit{On some classes of open two-species exclusion processes}, Markov Proc. Rel. Fields \textbf{18} (2012), 157.

\bibitem{CrampeMRV} N. Crampe, K. Mallick, E. Ragoucy and M. Vanicat, \textit{Open two-species exclusion processes with integrable boundaries}, J. Phys. A: Math. Theor. \textbf{48} (2015) 175002 ;  arXiv:1412.5939.

\bibitem{CrampeFRV} N. Crampe, C. Finn, E. Ragoucy, M. Vanicat, \textit{ Integrable boundary conditions for multi-species ASEP};  arXiv:1606.01018. 

\bibitem{FerrariM1} P. A. Ferrari and J. B. Martin, \textit{Multiclass processes, dual points and M/M/1 queues}, Markov
Processes Rel. Fields \textbf{12} (2006), 175.

\bibitem{FerrariM2} P. A. Ferrari and J. B. Martin, \textit{Stationary distributions of multi-type totally asymmetric exclusion processes}, Ann. Prob. \textbf{35} (2009), 807.

\bibitem{EvansFM} M. R. Evans, P. A. Ferrari and K. Mallick, \textit{Matrix representation of the stationary measure for the multispecies TASEP}, J. Stat. Phys. \textbf{135} (2009), 217;  arXiv:0807.0327.

\bibitem{ProlhacEM} S. Prolhac, M. R. Evans and K. Mallick, \textit{Matrix product solution of the multispecies partially asymmetric exclusion process}, J. Phys. A: Math. Theor. \textbf{42} (2009), 165004; arXiv:0812.3293.

\bibitem{AritaAMP} C. Arita, A. Ayyer, K. Mallick and S. Prolhac, \textit{Generalized matrix Ansatz in the multispecies exclusion process - partially asymmetric case}, J. Phys. A: Math. Theor. \textbf{45} (2012), 195001; arXiv:1201.0388.

\bibitem{AritaM} C. Arita and K. Mallick, \textit{Matrix product solution to an inhomogeneous multi-species TASEP }, J. Phys. A: Math. Theor. \textbf{46} (2013), 085002;  arXiv:1209.1913.

\bibitem{Uchi} M. Uchiyama, \textit{Two-species asymmetric simple exclusion process with open boundaries},
Chaos, Solitons \& Fractals \textbf{35} (2008), 398.

\bibitem{Cantini09} L. Cantini, \textit{Algebraic Bethe Ansatz for the two species ASEP with different hopping rates}, J. Phys. A: Math. Theor. \textbf{41} (2009), 095001; arXiv:0710.4083.

\bibitem{AyyerL12} A. Ayyer and S. Linusson, \textit{An inhomogeneous multispecies TASEP on a ring}, Advances in Applied Math \textbf{57} (2014), 21--43; arXiv:1206.0316.

\bibitem{AyyerL14} A. Ayyer and S. Linusson, \textit{Correlations in the multispecies TASEP and a conjecture by Lam},
arXiv:1404.6679.

\bibitem{Cantini15} L. Cantini, \textit{Asymmetric simple exclusion process with open boundaries and Koornwinder polynomials}; arXiv:1506.00284.

\bibitem{CorteelMW} S. Corteel, O. Mandelshtam and L. Williams , \textit{Combinatorics of the two-species ASEP and Koornwinder moments}; arXiv:1510.05023.
   
\bibitem{CorteelW} S. Corteel and L. Williams, \textit{Macdonald-Koornwinder moments and the two-species exclusion process}; arXiv:1505.00843.
  
\bibitem{baxterbook} R.J. Baxter, \textit{ Exactly solved models in statistical mechanics}, (1982,  London: Academic Press Inc.).

\bibitem{Skl} E. K. Sklyanin, \textit{Boundary conditions for integrable quantum systems}, J. Phys. A: Math. Gen. \textbf{21} 10 (1988), 2375.

\bibitem{CrampeRV} N. Crampe, E. Ragoucy, M. Vanicat, \textit{Integrable approach to simple exclusion processes with boundaries. Review and progress}, J. Stat. Mech. (2014), P11032;  arXiv:1408.5357.

\bibitem{Noumi} M.~Noumi, \textit{Macdonald-Koornwinder polynomials and affine Hecke rings},
S\textoverline{u}rikaisekikenky\textoverline{u}sho K\textoverline{o}ky\textoverline{u}roku \textbf{919} (1995), 44--55 (in Japanese).

\bibitem{KasataniT} M. Kasatani and Y. Takeyama, \textit{The quantum Knizhnik--Zamolodchikov equation and non-symmetric Macdonald polynomials}, Funkcialaj ekvacioj. Ser. Internacia \textbf{50} (2007), 491--509; arXiv:math/0608773.

\bibitem{Resh90} N.Y. Reshetikhin, \textit{Multi-parameter quantum groups and twisted quasi-triangular Hopf algebras}, Lett. Math. Phys. \textbf{20} (1990), 331--335.

\bibitem{StokmanV} J. Stokman and B. Vlaar, \textit{Koornwinder polynomials and the XXZ spin chain}; accepted for publication in Journal of Approximation Theory;  arXiv:1310.5545.

\bibitem{Sahi}, \textit{Nonsymmetric Koornwinder polynomials and duality}, Ann. of Math. \textbf{150} (1999), 267--282.

\bibitem{St00} J.V. Stokman, \textit{Koornwinder polynomials and affine Hecke algebras}, Internat. Math. Res. Notices \textbf{19} (2000), 1005--1042.


\bibitem{Kasatani} M. Kasatani, \textit{Boundary quantum Knizhnik-Zamolodchikov equation}, in New Trends in Integrable Systems, ed. B. Feigin, M. Jimbo and M. Okado, 157--172 (World Scientific, 2014).

\bibitem{Koornwinder92} T.H. Koornwinder, \textit{Askey-Wilson polynomials for root systems of type BC}, Contemp. Math. \textbf{138} (1992), 189--204.

\bibitem{Diejen}J.F. van Diejen, \textit{Self-dual Koornwinder-Macdonald polynomials}, Invent. Math. \textbf{126} (1996), 319--339.

\bibitem{RainsW} E.~Rains and O.~Warnaar, private communication.

\bibitem{Schutz2015}
V.~Belitsky and G.M.~Schuetz, \textit{Self-duality for the two-component asymmetric simple exclusion process}, Journal of mathematical physics {\bf 56} (2015),  083302; arXiv:1504.05096.

\bibitem{Kuan2015} J.~Kuan, \textit{Stochastic duality of ASEP with two particle types via symmetry of quantum groups of rank two}, J. Phys. A \textbf{49} (2016), 115002; arXiv:1504.07173.

\bibitem{Mandelshtam2015} O.~Mandelshtam, \textit{Matrix ansatz and combinatorics of the k-species PASEP}; arXiv:1508.04115.

\end{thebibliography}
\end{document}